\def\One{\mathbb{I}}
\newtheorem{theorem}{Theorem}[section]
\newtheorem{lemma}[theorem]{Lemma}
\theoremstyle{definition}
\newtheorem{definition}[theorem]{Definition}
\theoremstyle{remark}
\numberwithin{equation}{section}
\begin{document}

\title[Landau poles and cut structure by Hopf algebra]{Some results on Landau poles and Feynman diagram cut structure by Hopf algebra}

\author{William Dallaway}
\thanks{}

\author{Karen Yeats}
\thanks{KY is supported by an NSERC Discovery grant and the Canada Research Chairs program.  KY thanks Dirk Kreimer for longstanding collaborations that led to this paper.}

\date{}

\begin{abstract}
We investigate a system of differential equations for the beta function of massless scalar $\phi^4$ theory and continue the combinatorial investigation of the cut structure of Feynman diagrams.
\end{abstract}

\maketitle

\section{\label{sec:level1} Introduction}

The goal of this work is to analyse the anomalous dimension and the beta function of a scalar quantum field theory using a differential equation obtained through combinatorial techniques. We also consider combinatorial questions related to the coaction of the cut structure of Feynman diagrams.

\medskip

Broadhurst and Kreimer in \cite{bkerfc} derived a non-linear first order differential equation for the anomalous dimension of a particular approximation to the fermion propagator in a massless Yukawa theory and a non-linear third order differential equation for the scalar $\phi^3$ case.  In \cite{YeatsThesis} one of us developed a more general approach following the original Broadhurst and Kreimer Yukawa example that quite broadly builds a non-linear first order differential equation or system of non-linear first order differential equation for anomalous dimensions of quantum field theories.  The cost of this level of generality was that much of the complexity and character of the specific quantum field theory is swept into a catch-all function $P$, but a benefit is that the overall shape of the differential equation is controlled by combinatorial features of the insertion of Feynman diagrams in the theory.  The specific cases of the photon propagator of QED in a Baker-Johnson-Wiley gauge and the gluon propagator in similarly special gauge in massless QCD were studied in \cite{van_Baalen_2009} and \cite{vBKUY2} respectively.  These both correspond to single differential equations in this framework.  The massless $\phi^4$ case, a system of two differential equations, was set up at the end of \cite{YeatsThesis} but never subsequently investigated until the present paper.  

The differential equation set up is particularly well suited to understanding when there are solutions for the anomalous dimension that exist for all values of the coupling and when there are solutions that exist for all values of the energy scale, that is, when there are or are not Landau poles.  Consequently, these questions are a focus in the case studied here as well as in the previously studied cases of \cite{van_Baalen_2009} and \cite{vBKUY2}.

Finally, let us note that the original two differential equations of Broadhurst and Kreimer have recently been very strikingly studied from a resurgence perspective in \cite{BORINSKY2020115096, BDMphi3, BORINSKY2022115861}, and Bellon and collaborators investigated the anomalous dimension in the Wess-Zumino model \cite{bel2008, BShigher, BCorder} in a similar context.

\medskip

An algebraic and combinatorial approach to the relationship between the cut structure and subdivergence structure of Feynman diagrams was recently developed by one of us along with Dirk Kreimer \cite{Algebraic}.  The cut structure of a Feynman diagram controls its monodromy around singularities and is related to its infrared behaviour while the subdivergence structure controls its behaviour under renormalization.  However, the set up of \cite{Algebraic} required fixing a spanning tree, as motivated by connections to Cullen and Vogtmann's Outerspace \cite{CVouterspace, BeK, Kouterspace}, something which is unnecessary and at times cumbersome.  We rectify this here with a tree-independent formulation and then look at some of the combinatorial preliminaries necessary for extending Klann's calculations \cite{KlannMSc} on how the cut structure relates to the infrared divergences.

\medskip

The basis of all of this are the renormalization Hopf algebras first developed by Connes and Kreimer and first we will give a brief account of this. We then present the differential equations which will be the main focus of the analysis. The remainder of the introduction considers extensions of the Hopf algebras and a coaction studied in \cite{Algebraic}.  

For a mathematical reader uninterested in the physics background, the key things are the differential equations \eqref{eq de1}, \eqref{eq de2} along with a general sense that there is some motivation to studying them and especially to studying when solutions exist for all $x$ and all $L$, and the core and cut coproducts on graphs Definition~\ref{def core} and \eqref{eq modified core coprod}, and \eqref{eq cut coprod}, along with a general sense that the relationship between them is meaningful.  Everything else in this introduction is motivation.

\subsection{Feynman diagrams and renormalization Hopf algebras}

Quantum field theory is a framework in which we can understand arbitrary numbers of interacting particles quantum mechanically.  It is the standard way to unify quantum mechanics and special relativity.  In high energy physics, a prototypical experiment consists of generating known particles, colliding them together at high energy, and investigating the particles that appear as a result.  On the theoretical and mathematical side of high energy physics, then, we want to better understand the mathematical structures in quantum field theory so as to be able to better calculate the probability amplitudes of particle interactions as occur in such a collider experiment.

One longstanding but still important technique for computing these amplitudes is the loop expansion in Feynman diagram.  In this approach amplitudes are computed and studied as given by an infinite series indexed by Feynman diagrams.  Very roughly Feynman diagrams are graphs where the edges represent particles propagating and the vertices represent particle interactions.  Each graph contributes to the amplitude via its Feynman integral, an integral which can be read off of the graph.  Particles entering or exiting are represented as unpaired half-edges in the graph.  The \emph{loop number} of the graph is the dimension of the cycle space of the graph, and in the loop expansion of an amplitude, the series is organized by increasing number of loops.

Formally we will define graphs as follows.  

\begin{definition}
A \emph{graph} $G$ consists of
\begin{itemize}
    \item a set $H(G)$ of the \emph{half edges} of the graph,
    \item a set partition, $\mathcal{V}(G)$, of $H(G)$ into parts of size at least 3, the parts of which are the \emph{vertices} of the graph, and
    \item a set partition, $\mathcal{E}(G)$, of $H(G)$ into parts of size at most 2, the parts of which are the \emph{edges} of the graph.
\end{itemize}
\end{definition}
This notion of graph is a little different from what is usual in graph theory.  It allows multiple edges and self-loops, but does not allow vertices of degree less than 3 -- in this formulation the \emph{degree} of a vertex is its size as a part in the set partition.  Additionally, $\mathcal{E}(V)$ may include edges where the size of the part is 2 which we call \emph{internal edges} and which correspond to edges in the usual graph theory sense, but also edges where the size of the part is 1 which we call \emph{external edges} and which correspond, as described roughly above, to the particles entering and exiting the system.  This notion of graph is closely related to the usual formulation of a \emph{combinatorial map} (see for instance \cite{LZgraphs}) but without any cyclic ordering of the half edges at each vertex.

Despite these differences most graph theory notions can be inherited directly from their usual versions (see \cite{Dbook} for a standard graph theory introduction), and so we speak of notions such as connectivity for our graphs without further definition.

To move from these graphs to our formulation of Feynman diagrams, we will extract only the algebraic or combinatorial part that we need from the quantum field theory in the following definition.

\begin{definition}
A \emph{combinatorial physical theory} is a set of half edge types along with
\begin{itemize}
    \item a set of pairs of not necessarily distinct half edge types defining the permissible edge types,
    \item a set of multisets of half edge types defining the permissible vertex types,
    \item an integer called the \emph{power counting weight} for each edge type and each vertex type, and
    \item a nonnegative integer \emph{dimension of spacetime}.
\end{itemize}
\end{definition}
Typically, the set of half edge types will be finite.  The dimension of spacetime should be the one at which the theory is renormalizable but not superrenormalizable.  The combinatorial meaning of this will be given later.  

For example, here are some standard quantum field theories in this framework.
\begin{itemize}
    \item   Quantum electrodynamics (QED) has 3 half edge types, a half photon, a front half fermion, and a back half fermion. There are two edges type, the pair of two half photons, giving a photon edge which is drawn as a wiggly line and which has power counting weight 2, and the pair of a front half fermion and a back half fermion, giving a fermion edge, which by its construction is oriented and drawn as an edge with an arrow, and has power counting weight 1. There is one vertex consisting of one of each half edge type and with weight 0. The dimension of spacetime is 4.
    \item Scalar $\phi^4$ theory has just one half edge type and just one edge type consisting of a pair of the half edges drawn as a plain unoriented edge.  This edge type has weight 2. The $4$ of $\phi^4$ indicates that the one vertex is a 4-valent vertex; that is, the vertex consists of a multiset of 4 copies of the half edge. It has weight 0.  The dimension of spacetime is 4.
\end{itemize}
For further examples see \cite{YeatsThesis}.

\begin{definition}
A \emph{Feynman graph} in a given combinatorial physical theory is a graph in the sense above along with a map from the half edges of the graph to the set of half edge types such that every internal edge is of a permissible edge type and every vertex is of a permissible vertex type.
\end{definition}

The step from this point to actual loop expansions in quantum field theory is a map known as the \emph{Feynman rules} which associate a Feynman integral to each Feynman graph.  There are many different takes on the Feynman rules; for a standard physics take consider a standard quantum field theory text book such as \cite{iz} or \cite{psbook}, while for a more mathematical viewpoint one might see the Feynman rules as coming from the exponential map on the Lie algebra associated to the renromalization Hopf algebra \cite{Lskript}, but in any case the details won't be important for us.

However, a property of the Feynman integrals that will matter is that in most interesting cases they are divergent integrals.  Because of this it is best to think of the Feynman rules as mapping not to integrals but to formal integral expressions \cite{YeatsThesis, Pphd} so the integrands or differential forms can be manipulated and compared.  Divergences of a Feynman integral come in a few types.  There may be a proper subgraph that already corresponds to a divergent integral; this is a \emph{subdivergence}.  Divergences may come when momenta are taken in a limit where they are large, these are \emph{ultraviolet} (UV) divergences; or when some aspect of the story becomes small, for example when two particles become parallel so the angle between them goes to $0$, these are \emph{infrared} (IR) divergences.  The IR divergences are much more subtle and we will postpone any discussion of them until we have the machinery of Cutkosky cuts in place.  The UV divergences are more straightforward.  The power counting weights in the definition of combinatorial physical theory are the power to which the factor given by the corresponding edge or vertex grows as the momenta get large.  Because of this we can determine the overall degree of UV divergence of a Feynman graph from only these combinatorial considerations.

\begin{definition}
For a Feynman graph G in a combinatorial physical theory, let $w(a)$ be the power counting weight of an internal edge or a vertex  $a$ of $G$ and let $D$ be the dimension of spacetime. Then the \emph{superficial degree of divergence} is
\[
D\ell - \sum_{e \in \mathcal{E}(G)} w(e) - \sum_{v\in \mathcal{V}(G)} w(v)
\]
where $\ell$ is the loop number of the graph.

If the superficial degree of divergence of a graph is nonnegative we say the graph
is (UV) \emph{divergent}. If it is 0 we say the graph is (UV) \emph{logarithmically} divergent.
\end{definition}
Note that as $D$ increases, more graphs are deemed divergent.  For the theories of interest to use there is a special value of $D$ where the superficial degree of divergence of a connected graph depends only on its external edges.  This is the value of $D$ that corresponds to the physical situation of the theory being renormalizable but not superrenormalizable, and is the value of $D$ we typically want in our combinatorial physical theories.

\medskip

The problem of how to deal with UV divergences has long been understood by physicists, and beautifully, one perspective on it can be reformulated using a combinatorial Hopf algebra.

At this point it is also important to note what notion of subgraph we want.  
\begin{definition}
A \emph{subgraph} $\gamma$ of a graph $G$ is a graph where $H(\gamma)\subseteq H(G)$, $\mathcal{E}(\gamma)$ is a refinement of $\mathcal{E}(G)$ restricted to $H(\gamma)$ and $\mathcal{V}(\gamma)$ is exactly $\mathcal{V}(G)$ restricted to $H(\gamma)$ with the additional property that every part in $\mathcal{V}(\gamma)$ has the same size in $\mathcal{V}(G)$. 
\end{definition} 
For example, consider the graph on the left in Figure~\ref{fig:bananas}.  It has the graph on the right in the figure as a subgraph in three ways.  In each case the set of half edges in the subgraph is the same as in the original graph, and the set of vertices is thus necessarily also the same, but the set of edges in each subgraph refines exactly one of the internal edges of the original graphs by breaking the part of size two corresponding to the internal edge into two parts of size 1, giving two new external edges.  
\begin{figure}
    \centering
    \includegraphics{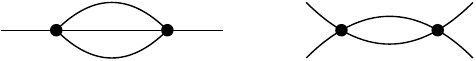}
    \caption{The right hand graph is a subgraph of the left hand graph in three ways.}
    \label{fig:bananas}
\end{figure}

Note that this definition of subgraph guarantees that if a graph is a Feynman graph then the subgraph will also be a Feynman graph with the inherited mapping to the half edge types.

We define the contraction of a connected subgraph $\gamma$ within a graph $G$ differently depending on whether $\gamma$ has more than two external edges or not.  If $\gamma$ has more than two external edges, then simply contract all internal edges of $\gamma$, so $\gamma$ becomes a new vertex made of its external edges.  If $\gamma$ has two external edges, then to avoid a 2-valent vertex, remove both internal and external edges of $\gamma$ and then join into an edge the two half-edges originally joined to the two external edges of $\gamma$, or leave as external, the half if there was only one.  Subgraphs $\gamma$ with one external edge will not come up for us, but they would be treated also by removing all edges.  For a disconnected subgraph $\gamma$ contract it by contracting each connected component.

As is often done in enumerative combinatorics as well as in quantum field theory we can reduce to considering connected Feynman graphs by the exp-log transformation, and furthermore, using a Legendre transform \cite{JKMtowards} we can restrict to considering \emph{one particle irreducible} (1PI), that is bridgeless in the graph theory sense.  Now we are ready to define a Hopf algebra structure on the connected 1PI Feynman diagrams of a combinatorial physical theory.

For the definition of the renormalization Hopf algebra we need one more property for the combinatorial physical theory, we need that the external edges of any divergent subgraph appear as a vertex or edge type of the theory.  If this is not the case, extend the allowable vertex and edge types as necessary.
\begin{definition}
Given a combinatorial physical theory as above, let $\mathcal{G}$ be the set of connected 1PI Feynman graphs in that theory.  Define the \emph{renormalization bialgebra}, $\mathcal{H}$, of the theory as follows.  As an algebra $\mathcal{H} = \mathbb{Q}[\mathcal{G}]$ and we identify disconnected graphs with the monomial of their connected components.  The coproduct is defined on elements of $\mathcal{G}$ by
\[
    \Delta(G) = \sum_{\substack{\gamma \subseteq G \\ \gamma \text{  divergent 1PI}}} \gamma \otimes G/\gamma
\]
and extended as an algebra homomorphism to $\mathcal{H}$.  The counit is defined by $\epsilon(1)=1$ and $\epsilon(G)=0$ for $G$ with loop order at least 1 and extended as an algebra homomorphism.  
(Note that the subgraphs in the sum do not need to be connected).
\end{definition}

This turns out to be a bialgebra.  Furthermore, $\mathcal{H}$ is graded by the loop order of the graphs, and so by taking a quotient which identifies all zero loop graphs in $\mathcal{G}$ with 1 we get a graded connected bialgebra and so automatically a Hopf algebra, the \emph{renormalization Hopf algebra} of the combinatorial physical theory.  Note that since $\mathcal{G}$ consists of connected 1PI graphs, the only possible zero loop graphs are single vertices.  See \cite{Bphd} for a recent presentation of these structures originally due to Kreimer \cite{Khopf} and Connes and Kreimer \cite{ckI}.

The \emph{antipode} is the extra map that makes a bialgebra a Hopf algebra, and in this context it is of key importance, since it tells us recursively how to break a Feynman graph into subgraphs so as to take care of the UV subdivergences.  To do so takes an additional step since the antipode is a map from the Hopf algebra to itself, but to renormalize we need a map to the integrals.  However, this renormalization map has the same recursive form as the antipode with a regularization scheme and the Feynman rules applied in the appropriate places \cite{ckI} (see also, eg, section 2.3.2 of \cite{YeatsThesis}).  For us we need only understand that the Hopf algebra (of which the coproduct is the only nontrivial map) controls renormalization, and so any algebraic or combinatorial structure that is compatible with the coproduct will  likewise be compatible with renormalization, which is important from a physical point of view.

We can also define the core Hopf algebra on the set of all 1PI graphs.
\begin{definition}\label{def core}
Lt $\mathcal{G}$ be the set of connected 1PI graphs.  Define the \emph{core bialgebra}, $\mathcal{H}_{\text{core}}$, as follows.  As an algebra $\mathcal{H} = \mathbb{Q}[\mathcal{G}]$ and we again identify disconnected graphs with the monomial of their connected components.  The coproduct is defined on elements of $\mathcal{G}$ by
\[
    \Delta(G) = \sum_{\substack{\gamma \subseteq G \\ \gamma \text{  1PI}}} \gamma \otimes G/\gamma
\]
and extended as an algebra homomorphism to $\mathcal{H}$.  The counit is defined by $\epsilon(1)=1$ and $\epsilon(G)=0$ for $G$ with loop order at least 1 and extended as an algebra homomorphism. 
\end{definition}
The core bialgebra can be made into the \emph{core Hopf algebra} as for renormalization Hopf algebras.  Note that the core Hopf algebra can be thought of as the limit, when the dimension of spacetime grows, of the renormalization Hopf algebra for a physical theory with one undirected edge type, since eventually every 1PI graph becomes divergent, and every vertex degree is required.

\subsection{The anomalous dimension}

Thinking back to our idealized particle scattering experiment, we were not really interested in the probability amplitude of any specific Feynman integral, but rather we used the loop expansion in Feynman integrals to understand the amplitude of some particular process.  In this context that means that we fix the external edges and then are interested in the series indexed by Feynman diagrams with those external edges where each diagram contributes its Feynman integral.  This is a series in the \emph{coupling constant}, a parameter measuring the strength of the particle interaction, which we will call $x$. Using Euler's formula and properties of the vertices of the theories of interest to us, we see that the power of the coupling constant for a Feynman graph in a combinatorial physical theory of interest to us grows with the loop order and so we can think of this series as a kind of generating function in the sense of enumerative combinatorics which counts Feynman graphs by loop order and weighted by their Feynman integral.  

In the cases of interest these series are expected to be divergent series (the number of Feynman diagrams grows factorially, but the Feynman integrals can come in either sign and while there are many indications that the series should still diverge, this is mostly not proven).  For the purposes of this paper, it suffices to think of these series simply as formal power series, as one might expect from the enumerative perspective.  However, from an analytic perspective they are asymptotic series, can typically be usefully resummed, and have interesting resurgence properties a topic of significant recent interest, see for example \cite{BCalien, BORINSKY2020115096, BDMphi3, BORINSKY2022115861}, but this will not be important for the present purposes.

As mentioned above, these series can be reduced to series indexed by connected 1PI diagrams by log and Legendre transforms and so the key series for us will be series over connected 1PI diagrams with a fixed multiset of external edges, either over all such diagrams or over particular subclasses of them.  These series we will call the \emph{Green functions}.  As well as being a function of the coupling constant, the Green function is a function of the kinematical parameters: the momenta of the (fixed) incoming and outgoing particles.  For our purposes we will restrict to the single scale case where there is only one such kinematical parameter.  We will let $L$ denote the log of the norm of this one momentum going through the graph and our Green functions we will consider as formal series in the variables $x$ and $L$.  We will denote these Green functions as $G^{(r)}(x,L)$, where the index $r$ corresponds to the multiset of external edges.

There is a very important equation satisfied by the Green functions, known as the \emph{renormalization group equation} or \emph{Callan–Symanzik equation}.  One might have hoped that the coupling constant had a small value since we are expanding a series in it, but one consequence of renormalization is that the coupling constant depends on the energy level, that is on $L$, a phenomenon known as the \emph{running} of the coupling.  
The renormalization group equation captures how change in $x$ and change in $L$ affect $G(x, L)$. Specifically
\[
\left(\frac{\partial}{\partial L} + \beta(x)\frac{\partial}{\partial x} - \gamma^{(r)}(x)\right)G^{(r)}(x,L) = 0
\]
The $\beta(x)$ physically encodes the flow of the coupling depending on the energy scale and $\gamma^{(r)}(x)$ is the \emph{anomalous dimension} of $G^{(r)}(x,L)$ and as a formal series is the coefficient of $L^1$ in $G(x,L)$.  When we have a system of Green functions built from inserting graphs with different $r$ into each other, as we'll see in Dyson-Schiwnger equations shortly, then the $\beta(x)$ is a linear combination of the various $\gamma^{(r)}(x)$ weighted by some combinatorial factors.  See Section 4.1 of \cite{YeatsThesis} for a discussion.

\medskip

Another important equation for the Green function is the Dyson-Schwinger equation.  Dyson-Schwinger equations are the quantum analogues of the equations of motion.  At the level of the Feynman graphs they are built by inserting graphs into other graphs recursively and so they behave very much like functional equations for generating series in algebraic combinatorics.  See \cite{YeatsThesis} for examples.  After applying Feynman rules, the Dyson-Schwinger equations are integral equations for the Green functions.   The explicit integral gives the contribution of the graph into which we are inserting while the appearance of the Green function inside the integral gives the recursive contribution of the graphs being inserted.  The outermost graph being inserted into should be primitive in the renormalization Hopf algebra, that is it should be itself divergent but have no proper subdivergences.  We can insert other graphs into vertices and edges of this outer graph.  
To build all connected 1PI graphs with a given multiset of external edges, we would expect to need to insert into all the edges and vertices, and may need infinitely many primitives to insert into.

One typically has a system of Dyson-Schwinger equations for $G^{(r)}(x,L)$ for different $r$s.  Sometimes we restrict our attention to building graphs only out of certain pieces or work in particular theories where we can restrict to a single equation.  We also make some other simplifications.  We have been writing our Green functions with only a single kinematical parameter $L$, but vertex insertions will have multiple kinematical parameters since they have more than two external edges.  We get around this by considering one overall energy scale and treating the rest as constant angles that we simply carry along, see \cite{KRUGER2015293}. 

As a concrete example Broadhurst and Kreimer studied two particular examples of Dyson-Schwinger equations in \cite{bkerfc}. 
For their Yukawa theory example, the Dyson-Schwinger equation (in a different normalization than theirs) is 
\[
G(x,L) = 1 - \frac{x}{q^2}\int d^4x \frac{k\cdot q}{k^2 G(x, \log(k^2/\mu^2)) (k+q)^2} - \text{same integrand}\bigg|_{q^2=\mu^2}
\]
where $L=\log(q^2/\mu^2)$.  We can convert this into a pseudo-differential form, see p72 of \cite{Ybook},
\[
G(x,L) = 1-xG(x, d/d(-\rho))^{-1}(e^{-L\rho}-1)F(\rho)|_{\rho=0}
\]
where $F(\rho)$ is the Feynman integral of the primitive regularized by raising the propagator we insert into to the power $1+\rho$; in this particular case it is
\[
F(\rho)= \int d^4x \frac{k\cdot q}{(k^2)^{1+\rho}(k+q)^2} \bigg|_{q^2=1}.
\]
One nice thing is that the $\rho$ in $F(\rho)$ is functioning as a regulator, but it was not added in by hand.  It appeared naturally over the course of the calculation.

All the Dyson-Schwinger equations that are of interest to us can be put into such a pseudo-differential form.  The most general form considered in \cite{YeatsThesis} is 
\[
G^{(r)}(x,L) = 1-\text{sgn}(s_r)\sum_{k\geq 1} x^k G^{(r)}(x, \partial/\partial \rho) \prod_{j} (G^{(j)}(x, \partial/\partial \rho))^{s_j} (e^{-L\rho} -1) F_{k,r}(\rho) \bigg|_{\rho = 0}
\]
where the product is over the multisets of external edges which we consider in our theory; typically those which give divergent Feynman graphs.
In this formula, the $s_j$ are integers that keep track of how the number of insertion places for the multiset of external edges $j$.  There is a nice combinatorics of insertion counting, see Section 3.3.2 of \cite{YeatsThesis} or Section 5.4 of \cite{Ybook}.
Additionally, as well as simplifying by working with only a single scale $L$, there is also a simplification by using a symmetric insertion as in Section 2.3.3 of \cite{YeatsThesis} to only keep track of one insertion place, that is each $F_{k,r}$ has only a single $\rho$ as its argument. 

The pseudo-differential form has been very useful, notably as the basis for the chord diagram expansion solutions of Dyson-Schwinger equations \cite{MYchord, HYchord, CYchord, CYZchord, CYnnl}.  For the present purposes, however, we want to further reformulate our Dyson-Schwinger equations.  Write $G^{(r)}(x,L) = 1-\text{sgn}(s_r)\sum_{k\geq 1} \gamma^{(r)}_k(x)L^k$.  Then the anomalous dimension is $\gamma^{(r)}(x) = -\text{sgn}(s_r)\gamma^{(r)}_1(x)$.  Furthermore, extracting the coefficient of $L^k$ from the renormalization group equation and using the expression for $\beta(x)$ in terms of the anomalous dimensions we get
\[
\gamma^{(r)}_k(x) = \frac{1}{k} \left(\text{sgn}(s_r)\gamma^{(r)}_1(x) -\sum_j |s_j|\gamma^{(j)}_1(x) \frac{x\partial}{\partial x}\right) \gamma^{(r)}_{k-1}(x)
\]
see Theorem 4.1 of \cite{YeatsThesis} for details.

Now, if each $F_{k,r}(\rho)$ happen to be of the form $a_{k,r}/(\rho(1-\rho))$, as the unique $F(\rho)$ is for the Broadhurst-Kreimer Yukawa example, then something special happens.  Taking the coefficient of $L$ and $L^2$ in the pseudo-differential form of the Dyson-Schwinger equation, we obtain
\begin{align*}
2\gamma^{(r)}_2(x) 
& = -\text{sgn}(s_r)\sum_{k\geq 1} x^k G^{(r)}(x, \partial/\partial \rho) \prod_{j} (G^{(j)}(x, \partial/\partial \rho))^{s_j} \frac{a_{k,r}\rho}{1-\rho} \bigg|_{\rho = 0} \\
& = -\text{sgn}(s_r)\sum_{k\geq 1} x^k G^{(r)}(x, \partial/\partial \rho) \prod_{j} (G^{(j)}(x, \partial/\partial \rho))^{s_j} \left(\frac{a_{k,r}}{1-\rho} - a_{k,r}\right) \bigg|_{\rho = 0} \\
& = \gamma^{(r)}_1(x) + \sum_{k\geq 1}a_{k,r}x^k.
\end{align*}
The recurrence we extracted from the renormalization group equation also gives a relationship between $\gamma_2$ and $\gamma_1$ so combining them we get
\[
\gamma^{(r)}_1(x) = P^{(r)}(x) - \text{sgn}(s_r)\gamma^{(r)}_1(x)^2 + \sum_j |s_j|\gamma^{(j)}_1(x)x \frac{d}{dx} \gamma^{(r)}_1(x)
\]
where $P^{(r)}(x) = \sum_{k \geq 1}a_{k,r}x^k$.  Physically what this $P^{(r)}$ series is doing is taking the leading behaviour of each primitive graph and combining those into a series.  The further observation of Chapter 7 of \cite{YeatsThesis} is that even if the $F_{k,r}(\rho)$ are not geometric series, then we can still get a differential equation for $\gamma^{(r)}_1(x)$ of the same form at the cost of $P^{(r)}(x)$ containing not just the leading behaviour of each primitive but some extra stuff that's simply what it needs to be to make the formula hold.  This means we lose much of our physical intuition for $P^{(r)}(x)$.

None the less, this form of differential equation for the anomalous dimension was usefully studied for single equation versions of QED and QCD  and related cases in \cite{van_Baalen_2009, vBKUY2}, as well as being used in the Borinsky and Dunne's resurgent analysis \cite{BORINSKY2020115096} of the Broadhurst-Kreimer Yukawa case \cite{bkerfc} where $P(x)=x$ so there is no loss of physical meaning.

Near the end of \cite{YeatsThesis}, it was suggested that an interesting next case would be the two equation system for the vertex and propagator in massless $\phi^4$ theory:
\begin{align}\label{eq de1}
\gamma^+(x) & = P^+(x) + \gamma^+(x)^2 + (\gamma^+(x) + 2\gamma^-(x))x\frac{d}{dx}\gamma^+(x) \\
\label{eq de2}
\gamma^-(x) & = P^-(x) + \gamma^-(x)^2 + (\gamma^+(x) + 2\gamma^-(x))x\frac{d}{dx}\gamma^-(x),
\end{align}
but until now, this was not followed up on.  In this equation the $+$ indicates the vertex and the $-$ indicates the propagator.  $x(\gamma^+(x)+2\gamma^-(x))$ is the $\beta$-function and the dependence of $x$ on $L$ is as usual given by $\frac{dx}{dL} = \beta(x(L))$.

Even with the uncertainties in the interpretation of $P^{(r)}(x)$, one thing the $\gamma_1$ differential equation can be quite useful for, as was studied in \cite{van_Baalen_2009, vBKUY2}, is determining what conditions on $P^{(r)}(x)$ give a $\beta(x)$ which exists for all values of the coupling, or exists for all values of $L$.  If $\beta(x)$ diverges in finite $L$ then we say that we have a Landau pole.  Theories that are not asymptotically free are typically expected to have a Landau pole, but arguments from perturbation theory for Landau poles are not necessarily convincing since an asymptotic series doesn't necessarily say much away from the point of expansion.

\subsection{Cut Hopf algebra and coaction}\label{subsec orig coact}

There are more mathematically interesting complexities in Feynman graphs and Feynman integrals.  Results due to Cutkosky \cite{Ccut} tell us that the behaviour of a Feynman integral around a singularity can be computed from calculating Feynman integrals for graphs with cuts.  See the introduction of \cite{BlKcut} for a mathematical summary.  Cuts of the graph into two pieces are not sufficient; cuts into more pieces are needed to understand so called anomalous thresholds \cite{Canomalous}.  Cuts are also related to infrared singularities via the sector decomposition, see \cite{Algebraic} section 4.3 for a discussion.

An algebraic approach to cuts is via a cut coaction or coproduct \cite{Kreimer2021}.  A principal point of \cite{Algebraic} is to understand how the cut coproduct and the core coproduct interact.  We will summarize this construction here.

In order to bring cuts into the Hopf algebraic framework as in \cite{Algebraic}, we will rephrase the core Hopf algebra in terms of fixing a spanning tree and then taking certain sets of edges not in the tree.  This makes it easier to have the cut structure on the same level, since with a fixed spanning tree cuts are determined by sets of edges in the tree.

Following \cite{Algebraic} Appendix C, for this subsection we will work with pairs $(G,T)$ of a graph $G$ and a spanning tree $T$ of $G$.  We can take the span of such pairs in order to form a vector space.  We view $G$ itself in this context as the sum over all of its spanning trees $\sum_{T}(G, T)$ and in that way embed the vector space of graphs into the vector space of graph tree pairs.

Given a pair $(G,T)$ and an edge $e \in E(G)-E(T)$, the graph $T\cup e$ has exactly one cycle.  This cycle is known as the \emph{fundamental cycle} determined by $e$ and $T$.  For a fixed spanning tree, the fundamental cycles give a basis for the cycle space of the graph.  Running over all spanning trees, all cycles of the graph appear (typically multiple times) as fundamental cycles.

Also, given a pair $(G, T)$ and an edge $e\in E(T)$, the graph $T-e$ has exactly two components, so it partitions the vertices of $G$ into two parts.  The edges of $G$ with one end in each part define a cut of $G$ into two components; $e$ is one of the edges of the cut, though typically there will be more.  Removing $k$ edges of $T$ partitions the vertices of $G$ into $k+1$ parts.  The edges with ends in different parts again define a cut of $G$, this time a cut into $k+1$ components.

\medskip

Using the fundamental cycles we can rephrase the core Hopf algebra.  A subset $S\subseteq E(G)-E(T)$ determines a set of fundamental cycles and taking the union of these cycles within $G$ we obtain a $1PI$ subgraph (not necessarily connected) $\gamma_S$.  We can also perform a complimentary operation, given a subset $S\subseteq E(G)-E(T)$, contract all fundamental cycles other than those associated to edges in $S$.  Call the resulting graph $\gamma^{S}$.  Define
\[
\Delta(G,T) = \sum_{S\subseteq E(G)-E(T)} (\gamma_S, T\cap \gamma_S) \otimes (\gamma^{E(G)-E(T)-S}, T\cap \gamma^{E(G)-E(T)-S})
\]
This defines a coproduct on graph tree pairs.  Ignoring the spanning trees, all these terms appear in the core coproduct of $G$ and summing over spanning trees we obtain the core coproduct itself, using the embedding described above, see Section 3.5 of \cite{Algebraic}.
Additionally, note that whether to take $\gamma_S$ or $\gamma^S$ can determined purely positionally based on which side of the tensor one is on.  Even for an iteration of the coproduct the position determines what to do with a subset; things coming to the left are contracted, and the remaining subgraph is taken.  Consequently, we can ignore the graphs entirely and see that the subset coproduct
\[
\Delta(S) = \sum_{S'\subseteq S} S' \otimes (S-S')
\]
applied to $E(G)-E(T)$ carries the same information as the graph tree pair coproduct.

\medskip

We can do a similar thing to define a cut coproduct.  It will be useful not only to indicate edges to define the cut, but also to be able to indicate edges that are uncuttable, or equivalently edges which are contracted.  To do this we will use an interval in the power set of $E(T)$.  Write $[A, B] = \{S\subseteq E(T): A\subseteq S \subseteq B\}$ and interpret $[A,B]$ as $A$ defining a cut as above and $E(T)-B$ defining the uncuttable edges.  Then the coproduct we want is the incidence coproduct on these intervals
\[
\rho([A,B]) = \sum_{A\subseteq S\subseteq B} [A,S]\otimes[S,B]
\]
which we interpret as a coproduct on graph tree pairs with tree edges marked for cutting or as uncuttable, see Section C.2 of \cite{Algebraic}.

\medskip

To see how these two coproducts fit together we need to put them on a common footing, so fix $(G,T)$, write $\mathcal{P}$ for the power set operator, define $f:E(G)-E(T) \rightarrow \mathcal{P}(E_T)$ to be the fundamental cycle map, and define $\mathcal{A}$ to be the span of the formal symbols $B^{[A_1, A_2]}$ where $B\subseteq E(G)-E(T)$ and $[A_1, A_2]$ is an interval in $\mathcal{P}(f(B))$.  Then $\Delta$ and $\rho$ act on this space of symbols by acting on the $B$ part and the $[A_1, A_2]$ part respectively and carrying along the other part, restricted as necessary.

We may additionally add restrictions on the interval $[A_1, A_2]$ by having certain edges which are forbidden to become tadpoles (such as fermion edges which give zero in the Feynman integral if they appear as tadpoles).  The reader can consult Sections C.2 and C.3 of \cite{Algebraic} if they wish to see this detail worked out.

It remains to consider the product.  Since subgraphs are represented by unions of fundamental cycles within $G$, the product for these bialgebras must also be union within $G$, with the result being $0$ if the intervals are not compable with taking such a union.  This is a bit intricate to write out but we obtain the following.
The product is
\[
m(B_1^{[A_1, A_2]}, B_2^{[A_3, A_4]}) =
\begin{cases}
  (B_1\cup B_2)^{[A_1\cup A_3, A_2\cup A_4]} & \text{if } B_1\cap B_2=\emptyset, A_1|_{A'} = A_3|_{A'}\\
    & \quad \text{ and } A_2|_{A'} = A_4|_{A'}; \\                              
0 & \text{otherwise,}                                                           
\end{cases}                                                                     
\]
where $A'=f(B_1)\cap f(B_2)$ and $A|_{A'}$ is the set $A$ restricted to the set $A'$, that is $A\cap A'$.
The unit for this product is $\One = \emptyset^{[\emptyset, \emptyset]}$.
The core-type coproduct is
\begin{equation}\label{eq modified core coprod}                                                                    
\Delta(B^{[A_1, A_2]}) = \sum_{\substack{B_1\subseteq B \\ f(B_1)\cap A_1 = \emptyset}} B_1^{[A_1\cap f(B_1), A_2\cap f(B_2)]} \otimes (B\backslash B_1)^{[A_1\cap f(B\backslash B_1), A_2\cap f(B\backslash B_1)]}                         
\end{equation}
The counit for this coproduct is $\hat{\One}_{\Delta}(\One) = 1$ and $\hat{\One}_{\Delta}(B^{[A_1, A_2]}) = 0$ for $B\neq \emptyset$.
The cut coproduct is
\begin{equation}\label{eq cut coprod}
\rho(B^{[A_1, A_2]}) = \sum_{A_1\subseteq A \subseteq A_2} B^{[A_1, A]}\otimes B^{[A, A_2]}
\end{equation}
Let $\mathcal{A}_p$ be the subspace of $\mathcal{A}$ spanned by $B^{[\emptyset, A]}$, then $(\mathcal{A}_p, m, \Delta)$ is a Hopf algebra and $\Delta$ gives a coaction of $\mathcal{A}$ on $\mathcal{A}_p$, $\Delta: \mathcal{A} \rightarrow \mathcal{A}_p \otimes \mathcal{A}$, see \cite{Algebraic} discussion at the beginning of Section C.3.  Furthermore (\cite{Algebraic} Lemma C.1) $(\mathcal{A}, m, \rho)$ is a bialgebra.  Most importantly, with the coaction $\rho$, $(\mathcal{A}_p, m,\Delta)$ and
$(\mathcal{A},m,\rho)$ are in cointeraction (\cite{Algebraic} Theorem C.2).  Concretely this means that
on $\mathcal{A}$
\begin{itemize}
\item $\rho(\One)=\One\otimes\One$.
\item $m_{1,3,24}\circ(\rho\otimes\rho)\circ \Delta=(\Delta\otimes\mathrm{id})\circ\rho$, with:
\[                                                                              
m_{1,3,24}:\,\mathcal{A}\otimes \mathcal{A}\otimes \mathcal{A}\otimes \mathcal{\
A}\to \mathcal{A}\otimes \mathcal{A}\otimes \mathcal{A},                        
\]
\[                                                                              
m_{1,3,24}(w_1,w_2,w_3,w_4)=w_1\otimes w_3\otimes (w_2 w_4).                    
\]
\item $\forall w_1,w_2\in \mathcal{A}$, $\rho(w_1 w_2)=\rho(w_1)\rho(w_2)$,
\item $\forall w\in\mathcal{A}$, $(\hat{\One}_{\Delta}\otimes \mathrm{id})\circ\rho(w)=\hat{\One}_{\Delta}(w)\One $.
\end{itemize}
This is a cointeraction in the sense studied in \cite{Fcointeraction, E-FFKP, Fchrom}.

\medskip

Note that with the product $m$, each $B^{[A_1, A_2]}$ is a monomial of such symbols where the $B$ part consists of a single edge.  In both \cite{Algebraic} Section C.4 and \cite{KlannMSc} the notation $x_{e, [A_1, A_2]}$ is used for these generators on the single edge $e$.  These generators are the fundamental cycles with cut information added in.

Defining these structures in terms of graph tree pairs is convenient for connections to fundamental cycles and to Cullen and Vogtmann's Outerspace \cite{CVouterspace, BeK, Kouterspace}, but results in an explosion of terms when doing concrete calculations, see \cite{KlannMSc}.  In Section~\ref{sec tree indep} we will give a formulation of this coaction which does not require fixing spanning trees.

\section{Analysis of the differential equations for the anomalous dimensions in $\phi^4$ theory}
Here we will consider the equations of the introduction in the special case of $\phi^4$ theory, \eqref{eq de1} and \eqref{eq de2}. We will study the existence of global solutions and Landau poles. Our analysis of the global solutions for the beta function will follow closely that of \cite{van_Baalen_2009}, whereas in order to study the Landau poles we present a different method based on Sturm Liouville. We also compare and classify the solutions and give some numerical examples of the different solutions. 
\subsection{Equations and set up}
Let us briefly show how to specialize the equations derived in \cite{YeatsThesis} into the set of ordinary differential equations which we will be analysing here. We first note that there are two different residues in massless $\phi^4$ theory, one for the vertex and one for the propagator. Thus we can identify the residue set with the set $\{+,-\}$ where $+$ refers to the vertex and $-$ to the propagator. Since the single vertex in the theory is four valent the combinatorial invariant charge as defined in \cite{YeatsThesis} becomes 
\[
    Q_{\phi^4} = \frac{X^{+}}{\left(X^{-}\right)^2}
\]
From which it is evident that $s_{+} = -1$ and $s_{-} = 2$. With these facts in mind it is easy to see that the equations for the two anomalous dimensions and the beta function in terms of the coupling constant $x$ become (rearranged from \eqref{eq de1} and \eqref{eq de2} for the first two)
\begin{equation}\label{eq:2}
    \frac{d\gamma^{+}}{dx} = \frac{\gamma^{+}(x)-\left(\gamma^{+}(x)\right)^2-P^{+}(x)}{x[\gamma^{+} (x) + 2\gamma^{-}(x)]}
\end{equation}
\begin{equation} \label{eq:3}
      \frac{d\gamma^{-}}{dx} = \frac{\gamma^{-}(x)+\left(\gamma^{-}(x)\right)^2-P^{-}(x)}{x[\gamma^{+} (x) + 2\gamma^{-}(x)]}
\end{equation}
and
\begin{equation} \label{eq:4}
    \beta(x) = x\left[\gamma^{+}(x)+2\gamma^{-}(x)\right].
\end{equation}

These are the main equations which we are going to analyse. In order to study this we will also need some rearrangements of the system of equations given in equations \ref{eq:2} - \ref{eq:4}. The first rearrangement comes by replacing the coupling constant $x$ with the variable $L$ defined through 
\begin{equation}\label{eq:16}
    \frac{dx}{dL} = \beta(x(L))
\end{equation}
A simple application of the chain rule shows that in terms of $L$ equations \ref{eq:2} - \ref{eq:4} become
\begin{equation}\label{eq:5}
    \frac{d\gamma^{+}}{dL} = \gamma^{+}(L)-\left(\gamma^{+}(L)\right)^2-P^{+}(L)
\end{equation}
\begin{equation} \label{eq:6}
      \frac{d\gamma^{-}}{dL} = \gamma^{-}(L)+\left(\gamma^{-}(L)\right)^2-P^{-}(L)
\end{equation}
Using these substitutions 
we have eliminated the denominator of the differential equations. This is will be useful since it has simplified the equations and there are no longer problems in solving \ref{eq:5} and \ref{eq:6} when the beta function has a zero. One downside about all of the equations which we have presented so far is that they are all non-linear. There are multiple properties of linear differential equations which make them preferable for analysing see for example discussions in \cite{teschl2012ordinary}. Therefore we want to find a dependant variable which will make these equations linear. It is possible to do this for our case at the cost of increasing the order of each of the differential equations by one since equations \ref{eq:5} and \ref{eq:6} are Riccati equations. We therefore make the substitutions of the functions $u(L)$ and $v(L)$ defined by
\begin{equation} \label{eq:7}
    \gamma^{+} = \frac{1}{u(L)}\frac{du}{dL}
\end{equation}
\begin{equation} \label{eq:8}
    \gamma^{-} = -\frac{1}{v(L)}\frac{dv}{dL}
\end{equation}
In terms of the functions $u(L)$ and $v(L)$ the equations \ref{eq:5} and \ref{eq:6} become the following set of linear second order differential equations 
\begin{equation}\label{eq:9}
\frac{d^2 u}{dL^2}-\frac{du}{dL}+uP^{+}(L) = 0
\end{equation}
\begin{equation}\label{eq:10}
    \frac{d^2 v}{dL^2}-\frac{dv}{dL}-vP^{-}(L) = 0
\end{equation}
These two equations can be put into the self-adjoint form which will be helpful in some of our proofs and we thus remark now that the equations \ref{eq:9} and \ref{eq:10} can be written as 
\begin{equation}\label{eq:11}
    \frac{d}{dL}\left(e^{-L}\frac{du}{dL}\right) + e^{-L}P^{+}(L)u = 0
\end{equation}
\begin{equation}\label{eq:12}
    \frac{d}{dL}\left(e^{-L}\frac{dv}{dL}\right) - e^{-L}P^{-}(L)v = 0
\end{equation}
Each of these equations will help us in some way when finding solutions or proving things about the solutions. Before continuing on we will mention that there is a correspondence between Landau poles in $L$ of $\gamma^{+/-}$ and zeros of the solutions for $u$ and $v$. This correspondence will be stated precisely and proven in subsequent sections but for now it is worth noting that this substitution does much more than make the equations linear it will also help us find Landau poles.   
\subsection{Some closed form solutions}
In this section we will present some closed form solutions to these equations for certain special values of $P^{+/-}$.  These examples will give us some general intuition for the different solutions before we move on to making and proving general statements about the solutions. As far as we have found there are two possible choices for $P^{+/-}$ where we can find a closed form solution for $\gamma^{+/-}$. The two types of $P^{+/-}$ are $P^{+/-} = K^{+/-}$ where $K^{+/-}$ are constants and $P^{+/-} = \alpha_{+/-}L$.  We will consider these two cases separately beginning with the the case where $P^{+/-}$ are constants. 
\subsubsection{$P^{+/-}(x) = K^{+/-}$}

In this case equations \ref{eq:5} and \ref{eq:6} are directly integrable and the solutions are as follows. 

\[ 
\gamma^{-}(L) = \begin{cases}
a_{-}\left(\tanh\left[b^{-}-a^{-}(L-L_0)\right]\right)-\frac{1}{2} & K^{-}>-\frac{1}{4}\\
\frac{1}{c^{-}-(L-L_0)}-\frac{1}{2} & K^{-} = -\frac{1}{4} \\ 
\tilde{a}^{-}\left(\tan\left[\tilde{a}^{-}(L-L_0)+\tilde{b}^{-}\right]\right)-\frac{1}{2} & K^{-}<-\frac{1}{4}
\end{cases}
\] 
\[ 
\gamma^{+}(L) = \begin{cases} 
a^{+}\left(\tan\left[b^{+}-a^{+}(L-L_0)\right]\right)+\frac{1}{2} & K^{+}>\frac{1}{4}\\
\frac{1}{c^{+}+(L-L_0)}+\frac{1}{2} & K^{+} = \frac{1}{4} \\ 
\tilde{a}^{+}\left(\tanh\left[\tilde{b}^{+}+\tilde{a}^{+}(L-L_0)\right]\right)+\frac{1}{2} & K^{+}<\frac{1}{4}
\end{cases}
\] 
Where 
\begin{gather*}
a^{\pm} = \sqrt{K^{\pm}\mp\frac{1}{4}}, \qquad \tilde{a}^{\pm} = \sqrt{-\left(K^{\pm}\mp\frac{1}{4}\right)}, \qquad b^{+} = \textrm{arctan}\left(\frac{\gamma^{+}_0-\frac{1}{2}}{a^{+}}\right), \\ b^{-} = \textrm{arctanh}\left(\frac{\gamma^{-}_0+\frac{1}{2}}{a^{-}}\right), \qquad \tilde{b}^{+} = \textrm{arctanh}\left(\frac{\gamma^{+}_{0}-\frac{1}{2}}{\tilde{a}^{+}}\right), \\ \tilde{b}^{-} = \arctan\left(\frac{\gamma^{-}_{0}+\frac{1}{2}}{\tilde{a}^{-}}\right), \text{ and } c^{\pm} = \frac{1}{\gamma^{\pm}_0\mp\frac{1}{2}}.
\end{gather*}Of course there are nine possible explicit solutions for $\beta(L)$ depending on the values of $K^{+/-}$ as can be found using \eqref{eq:4}. We can also find $x(L)$ using \eqref{eq:16} and therefore we can find (in principle) $L(x)$ and therefore $\beta(x)$ although in practice the inversion might be difficult to do. It is worth pointing out here that already with these simple choices of $P^{+/-}$ there are solutions for the anomalous dimension of a function of $L$ with no Landau poles for any initial conditions, there are solutions which might or might not have Landau poles depending on initial conditions and there are solutions which have Landau poles regardless of the initial conditions. These examples will therefore be useful examples of the theorems which will be proven later. 

\medskip

Before moving onto the next example, consider another way of obtaining these solution, namely solve the equations \ref{eq:9} and \ref{eq:10} for $u$ and $v$ giving an equivalent solution to the one we found. The reason for solving these equations is that the solutions will be instrumental in some of the proofs later on and so it will be useful to have them outlined here. If the functions $P^{+/-}(L)$ are constant then the equations \ref{eq:9} and \ref{eq:10} reduce to the standard equations for a damped/driven oscillator and thus the solutions have the form 
\[ 
    u(L) = Ae^{k_{+}L} + Be^{k_{-}L}
\] 
where $A,B\in \mathbb{R}$ and
\[ 
k_{+/-} = \frac{1\pm i\sqrt{4K^{+}-1}}{2}    
\] 
as long as $K^{+}\neq 1/4$ since in this case $k_{+}=k_{-}$ and the two basic solutions fail to be linearly independent and thus fail to constitute a basis for solutions to the ODE. The solution in this case can be written as 
\[ 
    u(L) = e^{L/2}\left(A+BL\right)
\] 
The solutions for $v(L)$ are analogously given by,
\[ 
    v(L) = Ae^{\kappa_{+}L} + Be^{\kappa_{-}L}
\] 
where again $A,B\in \mathbb{R}$ and 
\[ 
    \kappa_{+/-} = \frac{1\mp \sqrt{4K^{-}+1}}{2}
\] 
Again these solutions form a basis of the solution space unless $K^{-}=-1/4$ in which case the solution is given by, 
\[ 
    v(L) = e^{L/2}\left(A+BL\right)
\] 
is the general solution. Notice how the values of $K^{\pm}$ which make the exponents imaginary and hence give growing oscillations and thus zeros of $u$ and $v$ and thus, as we will show, give Landau poles, correspond exactly to the solutions solved using the equation for $\gamma^{\pm}$ directly, as they should. Tables~\ref{tab:1} and \ref{tab:2} give a brief recap of the different types of solutions in both $\gamma^{+/-}$ and $u/v$ and weather the solutions in $L$ have Landau poles.
\begin{table}[]
    \centering
    \begin{tabular}{l|l|l|p{.4\linewidth}}
        $P^{+}(x(L))$ & $\gamma^{+}(L)$ & $u(L)$ & Landau Poles? \\
        \hline
        \hline
        $>\frac{1}{4}$  & $\propto \tan(L-L_0)$ & $\propto A\sin(kL)+B\cos(kL)$ & Exist for all initial conditions. \\
        \hline
        
        $ = \frac{1}{4}$ & $\propto \frac{1}{L-L_0}$ & $\propto e^{L/2}(A+BL)$ & Existence depends on initial conditions. \\ 
        \hline
        $<\frac{1}{4}$ & $\propto \tanh(L-L_0)$ & $\propto Ae^{k_{+}L} + Be^{k_{-}L}$ & No Landau poles regardless of initial conditions.
    \end{tabular}
    \caption{Summary of the different types of solutions for constant $P^{+}(L)$ depending on the values of $P^{+}(L)$.  }
    \label{tab:1}
\end{table}

\begin{table}[]
    \centering
    \begin{tabular}{l|l|l|p{.4\linewidth}}
        $P^{-}(x(L))$ & $\gamma^{-}(L)$ & $v(L)$ & Landau Poles? \\
        \hline
        \hline
        $<-\frac{1}{4}$  & $\propto \tan(L-L_0)$ & $\propto A\sin(kL)+B\cos(kL)$ & Exist for all initial conditions. \\
        \hline
        
        $ = \frac{1}{4}$ & $\propto \frac{1}{L-L_0}$ & $\propto e^{L/2}(A+BL)$ & Existence depends on initial conditions. \\ 
        \hline
        $>-\frac{1}{4}$ & $\propto \tanh(L-L_0)$ & $\propto Ae^{k_{+}L} + Be^{k_{-}L}$ & No Landau poles regardless of initial conditions.
    \end{tabular}
    \caption{Summary of the different types of solutions for constant $P^{-}(L)$ depending on the values of $P^{-}(L)$.  }
    \label{tab:2}
\end{table}
Now we will move on to consider solutions when $P^{+/-}(L)$ is of the form $P(L) = \alpha_{\pm}L$ 

\subsubsection{$P^{+/-}(L) = \alpha_{\pm}L$}

In this case it is easier to work with the linear second order differential equations of \ref{eq:9} and \ref{eq:10}.  The solutions for $u(L)$ and $v(L)$ are then linear combinations of Airy functions of the first and second kind of the form 
\[ 
    u(L) = Ae^{L/2}\textrm{Ai}\left(\frac{\frac{1}{4}-\alpha_{+}L}{\left(-\alpha_{+}\right)^{2/3}}\right) + Be^{L/2}\textrm{Bi}\left(\frac{\frac{1}{4}-\alpha_{+}L}{\left(-\alpha_{+}\right)^{2/3}}\right)
\] 

\[ 
    v(L) =  Ae^{L/2}\textrm{Ai}\left(\frac{\frac{1}{4}+\alpha_{-}L}{\left(\alpha_{-}\right)^{2/3}}\right) + Be^{L/2}\textrm{Bi}\left(\frac{\frac{1}{4}+\alpha_{-}L}{\left(\alpha_{-}\right)^{2/3}}\right)
\] 
Since as we have mentioned before that zeros of the functions $u$ and $v$ correspond to Landau poles it is worth looking at the zeros of both of these functions. Note that the Airy function of the first kind has an infinite number of zeros along the negative real axis and the Airy function of the second kind also has an infinite number of zeros along the negative real axis along with an infinite number of complex zeros. In principal any choice of the parameters gives solutions to the equation, but not all of these will be of interest to us because we wish to study only solutions which would physically make sense. There are good physical reasons to demand that the functions $\gamma^{+}$ and $\gamma^{-}$ are real functions and thus we demand this by demanding that $u$ and $v$ be complex functions whose derivatives have the same phase as the functions for all $L$. We also can demand that $L$ be real and positive and that $\alpha_{+}$ and $\alpha_{-}$ be real. With these assumptions we can restrict the type of zeros that we consider with the following argument. Since $\alpha_{\pm}$ are real we can write them as complex numbers in the following way $\alpha_{\pm} = |\alpha_{\pm}|e^{ik\pi}$ where $k$ is 1 or 2. Thus, $(-\alpha_{+})^{-2/3} = e^{i(1-2k/3)\pi}\left|\alpha_{+}\right|^{-2/3}$ and using the same manipulation we find $\alpha_{-}^{-2/3} = e^{-i2k\pi/3}\left|\alpha_{-}\right|^{-2/3}$. Since all other terms in the arguments of both Airy functions are real by assumption it follows that the phases of the arguments of the Airy functions are rational and therefore that these solutions have no complex zeros since none of the complex zeros of the Airy function have rational phase. Thus we can see that any solutions with $u$ and $v$ being complex have no Landau poles, whether or not they are physical solutions. 

\medskip

Based on this we can restrict ourselves to considering only real solutions for $u$ and $v$ which we do now. Since now we are only considering real solutions it follows $\alpha_{+}<0$ and $\alpha_{-}>0$. Since the zeros both Airy functions are negative there can only be a zero of $u$ or $v$ if there is some $L^{\ast}>L_0$ which has $A/B = - \textrm{Bi}\left(\frac{\frac{1}{4}+\alpha_{\pm}L^{\ast}}{\left(\alpha_{\pm}\right)^{2/3}}\right)/\textrm{Ai}\left(\frac{\frac{1}{4}+\alpha_{\pm}L^{\ast}}{\left(\alpha_{\pm}\right)^{2/3}}\right)$. It is simple to solve for the ratio $A/B$ in terms of the initial value for $\gamma^{\pm}$. The result is 
\[
\frac{A}{B} = -\frac{\gamma^{\pm}_{0}\textrm{Ai}\left(\frac{1/4+\alpha_{\pm}L_0}{\left(\alpha_{\pm}\right)^{2/3}}\right)+\textrm{Ai}'\left(\frac{1/4+\alpha_{\pm}L_0}{\left(\alpha_{\pm}\right)^{2/3}}\right)}{\gamma^{\pm}_{0}\textrm{Bi}\left(\frac{1/4+\alpha_{\pm}L_0}{\left(\alpha_{\pm}\right)^{2/3}}\right)+\textrm{Bi}'\left(\frac{1/4+\alpha_{\pm}L_0}{\left(\alpha_{\pm}\right)^{2/3}}\right)}.
\]
Thus the condition for the solution to have a Landau pole in this case becomes there exists some $L^{\ast}>L_0$ such that 
\begin{multline*}
    \textrm{Ai}\left(\frac{\frac{1}{4}+\alpha_{\pm}L^{\ast}}{\left(\alpha_{\pm}\right)^{2/3}}\right)\left[\gamma^{\pm}_{0}\textrm{Ai}\left(\frac{1/4+\alpha_{\pm}L_0}{\left(\alpha_{\pm}\right)^{2/3}}\right)+\textrm{Ai}'\left(\frac{1/4+\alpha_{\pm}L_0}{\left(\alpha_{\pm}\right)^{2/3}}\right)\right]\\ =  \textrm{Bi}\left(\frac{\frac{1}{4}+\alpha_{\pm}L^{\ast}}{\left(\alpha_{\pm}\right)^{2/3}}\right)\left[\gamma^{\pm}_{0}\textrm{Bi}\left(\frac{1/4+\alpha_{\pm}L_0}{\left(\alpha_{\pm}\right)^{2/3}}\right)+\textrm{Bi}'\left(\frac{1/4+\alpha_{\pm}L_0}{\left(\alpha_{\pm}\right)^{2/3}}\right)\right]
\end{multline*}
This concludes our two examples of closed form solutions we are going to study. In the next section we generalize the approaches used to study these solutions to make more general statements about the existence of Landau poles when the solutions are not necessarily expressible in closed form. 

\subsection{Existence of Landau poles}
One of our main goals in studying these equations is to determine when the solutions diverge in finite $L$ or $x$. Solutions where this happens for finite $L$ are called Landau poles. For solutions which diverge in finite $x$ we will simply call these  divergences, though some references call these Landau poles as well. These types of solutions occur in many quantum field theories, most famously the singularity in the QED beta function and are generally considered failures of the theory since they predict that the coupling constant will diverge at some finite scale which is nonphysical. Here we give conditions under which the theory we are studying will have or not have these Landau poles.

\medskip

We will consider the two cases of divergence in $x$ and $L$ separately, beginning with studying these as functions of $L$ since these are the Landau poles we're interested in. We have already mentioned and from this point on will frequently use that zeros of $u$ and $v$ correspond to Landau poles in $\gamma^{+}$ and $\gamma^{-}$. We now give a proof of this fact. 
\begin{lemma}\label{lem:1}
Let $u$ and $\gamma^{+}$ and $v$ and $\gamma^{-}$ be related via equations \ref{eq:7} and \ref{eq:8} and suppose that $u$ and $v$ are analytic. Then there is some $L^{*}<\infty$ such that $\textrm{lim}_{L\rightarrow L^{\ast}} \gamma^{\pm}(L) = \infty$ if and only if $L^{\ast}$ is a zero of $u$ or $v$ respectively. 
\end{lemma}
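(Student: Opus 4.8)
The plan is to exploit the fact that the substitutions \eqref{eq:7} and \eqref{eq:8} present the anomalous dimensions as logarithmic derivatives, $\gamma^{+} = u'/u = (\log u)'$ and $\gamma^{-} = -v'/v = -(\log v)'$, and then to read off the pole structure of each ratio from the analyticity of the denominator. I would treat the $u$, $\gamma^{+}$ pair in detail and observe that the $v$, $\gamma^{-}$ pair is identical up to an overall sign. Throughout I would assume, as is implicit in the setup, that $u$ and $v$ are nontrivial (not identically zero) solutions of \eqref{eq:9} and \eqref{eq:10}, since otherwise $\gamma^{\pm}$ is not even defined.

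First I would dispose of the ``only if'' direction by contraposition. If $u(L^{*})\neq 0$, then because $u$ is analytic so is $u'$, and the quotient $u'/u$ of two analytic functions is itself analytic at $L^{*}$ since the denominator does not vanish there; hence $\gamma^{+}$ is finite at $L^{*}$. Equivalently, a divergence of $\gamma^{+}$ at $L^{*}$ forces $u(L^{*})=0$. For the ``if'' direction I would use that a nontrivial analytic function has zeros of finite order: if $u(L^{*})=0$ then near $L^{*}$ we may factor $u(L) = (L-L^{*})^{m} g(L)$ with $m\geq 1$ an integer and $g$ analytic with $g(L^{*})\neq 0$. Differentiating and dividing then gives
\[
\gamma^{+}(L) = \frac{u'(L)}{u(L)} = \frac{m}{L-L^{*}} + \frac{g'(L)}{g(L)},
\]
where as $L\to L^{*}$ the second term tends to the finite value $g'(L^{*})/g(L^{*})$ while the first term blows up, so $|\gamma^{+}(L)|\to\infty$. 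The same two steps, with $\gamma^{-} = -v'/v$, handle $v$ and close the lemma.

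There is no deep obstacle here; the one point requiring care — which I would flag rather than call a difficulty — is the interpretation of the statement ``$\lim_{L\to L^{*}}\gamma^{\pm}=\infty$'' for a \emph{real} function of a real variable. The simple pole $m/(L-L^{*})$ produces a one-sided divergence, to $+\infty$ or $-\infty$ according to the side from which $L^{*}$ is approached, so the assertion is to be read as the magnitude diverging, i.e.\ the presence of a genuine pole, which is exactly the physical notion of a Landau pole. The other detail worth making explicit is that the argument relies on the zero of $u$ having a well-defined finite order, so that $u'$ vanishes to order exactly $m-1$ and cannot conspire with $u$ to cancel the singularity; this is precisely what the nontriviality and analyticity of $u$ guarantee, and it is the hypothesis that makes the logarithmic derivative have a simple pole of residue $m$ at every zero.
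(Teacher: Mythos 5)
Your proof is correct and follows essentially the same route as the paper: both analyze the logarithmic derivative $u'/u$ locally at a zero using analyticity (your factorization $u=(L-L^{*})^{m}g(L)$ is just a cleaner packaging of the paper's Taylor-series ratio, and your contrapositive for the ``only if'' direction matches the paper's observation that $u'$ stays bounded so $u$ must vanish). If anything, your version is slightly more careful than the paper's, since the explicit order-$m$ factorization makes transparent why the ratio cannot cancel at a higher-order zero, a point the paper's argument glosses over.
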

\begin{proof}
First suppose that $L^{\ast}$ is a zero of $u$ or $v$ since $u$ and $v$ are analytic, at $L^{\ast}$ there is an open interval $(L^{\ast}-\delta,L^{\ast}+\delta)$ where $u$ can be represented by a convergent Taylor series of the form 
\begin{equation*}
    u(L) = \sum_{n=1}^{\infty} \frac{u^{(n)}(L^{\ast})}{n!}(L-L^{\ast})^n
\end{equation*}
and thus 
\begin{equation*}
    u'(L) = \sum_{n=0}^{\infty} \frac{u^{n+1}(L^{\ast})}{n!}(L-L^{\ast})^n.
\end{equation*}
The term by term differentiation is justified by the fact that $u$ is analytic. Thus in this same open neighbourhood, 
\begin{equation*}
    \gamma^{+} = \frac{1}{L-L^{\ast}}\frac{\sum_{n=0}^{\infty} \frac{u^{n+1}(L^{\ast})}{n!}(L-L^{\ast})^n}{\sum_{n=0}^{\infty} \frac{u^{(n+1)}(L^{\ast})}{(n+1)!}(L-L^{\ast})^n} = \frac{f(L)}{L-L^{\ast}}.
\end{equation*}
Where $f(L)$ is a function with no zeros at $L^{\ast}$ and therefore we find that $\gamma^{+}$ diverges as $L\rightarrow L^{\ast}$.  The same proof applies to $\gamma^{-}$ by replacing $u$ with $v$. Conversely since $u$ and $v$ are analytic, by the Taylor series expansions we see that the derivatives of $u$ and $v$ can't diverge and thus either $u$ or $v$ must tend to zero.
\end{proof}
By Lemma~\ref{lem:1} in order to find when there are Landau poles in $\gamma^{\pm}$ as a function of $L$ it suffices to find the zeros of the solutions for $u$ and $v$. For a function satisfying a second order linear differential equation like $u$ and $v$ do it is possible to use Sturm-Liouville theory to determine information on the zeros of these functions which is what we will do. The next theorem demonstrates this. 
\begin{theorem}\label{thm:1}
Suppose that $P^{+}(L)>\frac{1}{4}$ on some interval $I = [L_0,L^{\ast}]$ and let $\rho^{+} = \textrm{min}_{L\in I} P^{+}(L)$ then if $J=[L_0,L_{0}+\frac{2\pi}{k}]\subseteq I$ where $k = \frac{\sqrt{4\rho^{+}-1}}{2}$ then $\gamma^{+}$ has a Landau pole in $I$. 
\end{theorem}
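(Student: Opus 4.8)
The plan is to translate the claim about Landau poles of $\gamma^{+}$ into a claim about zeros of $u$ via Lemma~\ref{lem:1}, and then to force such a zero inside $I$ using the Sturm comparison theorem applied to the self-adjoint form \eqref{eq:11}. Thus the goal becomes: show that every nontrivial solution $u$ of \eqref{eq:9} has a zero in $I$, after which Lemma~\ref{lem:1} produces the Landau pole at that zero. (The nontriviality of $u$ is automatic, since $u$ arises from the substitution \eqref{eq:7} for an actual anomalous dimension, and the analyticity needed by Lemma~\ref{lem:1} holds because $u$ solves a linear ODE with analytic coefficients in the cases of interest.)

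First I would record two Sturm--Liouville equations on a common footing. The actual equation is \eqref{eq:11}, namely $(e^{-L}u')' + e^{-L}P^{+}(L)\,u = 0$, of the form $(p u')' + q\,u = 0$ with $p(L) = e^{-L} > 0$ and $q(L) = e^{-L}P^{+}(L)$. Alongside it I would introduce the constant-coefficient comparison equation obtained by freezing $P^{+}$ at its minimum $\rho^{+}$, that is $(e^{-L}w')' + e^{-L}\rho^{+} w = 0$, equivalently $w'' - w' + \rho^{+} w = 0$. Since $\rho^{+} > \frac{1}{4}$, the characteristic roots are complex, so every nontrivial solution is $w(L) = e^{L/2}\bigl(A\cos(kL) + B\sin(kL)\bigr)$ with $k = \frac{\sqrt{4\rho^{+} - 1}}{2}$, and these have zeros spaced exactly $\pi/k$ apart. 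Crucially, $p$ is identical in the two equations and only the zeroth-order coefficient changes, so we are in the cleanest ``same $p$, larger $q$'' case: on $I$ we have $P^{+}(L) \geq \rho^{+}$, hence $q(L) = e^{-L}P^{+}(L) \geq e^{-L}\rho^{+} > 0$, and $u$ is at least as oscillatory as $w$.

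Next I would extract two zeros of the comparison solution inside $J$. Because the zeros of any nontrivial $w$ are spaced $\pi/k$, an interval of length $2\pi/k$ necessarily contains two \emph{consecutive} zeros $\alpha < \beta$ of $w$, both lying in $J = [L_0, L_0 + \frac{2\pi}{k}]$; this is precisely why the hypothesis fixes the length at $2\pi/k$ rather than $\pi/k$, since we do not get to prescribe where the zeros of $w$ fall. The Sturm comparison theorem then guarantees that $u$ vanishes in $(\alpha, \beta)$, and in the degenerate case where $P^{+} \equiv \rho^{+}$ on $[\alpha,\beta]$ and $u$ happens to be a scalar multiple of $w$, the zeros of $u$ coincide with $\alpha,\beta \in J$; either way $u$ has a zero in $J \subseteq I$. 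Applying Lemma~\ref{lem:1} at this zero yields the asserted Landau pole of $\gamma^{+}$ in $I$.

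The main obstacle I anticipate is careful bookkeeping around the comparison theorem rather than any genuine difficulty: one must check that the sign hypothesis $q \geq q_{\mathrm{comp}}$ holds throughout $I$ (immediate from $P^{+} \geq \rho^{+}$ and $e^{-L} > 0$), confirm that a \emph{general} comparison solution, not one with prescribed initial data, does contain two zeros in any interval of length $2\pi/k$, and dispose of the equality/proportionality degenerate case of Sturm comparison so that the conclusion ``$u$ has a zero in $I$'' is genuinely unconditional. I would note in passing that prescribing $w(L_0) = 0$ would let one shrink the interval to length $\pi/k$, but the stated form with length $2\pi/k$ avoids tying the comparison solution to the initial data of $u$.
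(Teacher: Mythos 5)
Your proposal is correct and follows essentially the same route as the paper: compare the self-adjoint equation \eqref{eq:11} against the constant-coefficient equation with $P^{+}\equiv\rho^{+}$, observe that the latter's solutions (zeros spaced $\pi/k$ apart) must have two zeros in the length-$2\pi/k$ interval $J$, invoke the Sturm comparison theorem to force a zero of $u$ in $I$, and convert that zero into a Landau pole via Lemma~\ref{lem:1}. Your treatment is in fact slightly more careful than the paper's, since you explicitly dispose of the degenerate case where $u$ is proportional to the comparison solution and justify the choice of interval length $2\pi/k$ over $\pi/k$.
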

\begin{proof}
Let $\tilde{u}(L)$ be the solution to \ref{eq:11} with $P^{+}(L) = \rho^{+}$ then from the previously presented solutions we know 
\begin{equation*}
    \tilde{u}(L) = e^{\frac{L-L_0}{2}}\left[\tilde{u}(L_0)\cos\left(k(L-L_0)\right) + \frac{\tilde{u}'(L_0)}{k}\sin\left(k(L-L_0)\right) \right]
\end{equation*}
It is elementary to observe that $\tilde{u}(L)$ has two zeros both contained in $J$ and therefore has two or more zeros in $I$, let $u(L)$ be the solution to \eqref{eq:11} with $P^{+}(L)$ be as in the theorem then since $P^{+}(L)\geq\rho^{+}$ by the Sturm comparison theorem it follows $u(L)$ has at least one zero in $I$ and by the lemma $\gamma^{+}$ must also have a Landau pole in $I$.
\end{proof}

Almost the exact same proof will work for $v$ except the condition for the solutions is slightly different. Below we state the theorem and explain the slight difference in the proof. 

\begin{theorem}\label{thm:2}
Suppose that $P^{-}(L)<-\frac{1}{4}$ on some interval $I = [L_0,L^{\ast}]$ and let $\rho^{-} = \textrm{max}_{L\in I} P^{-}(L)$ then if $J=[L_0,L_{0}+\frac{2\pi}{k}]\subseteq I$ where $k = \frac{\sqrt{1-4\rho^{-}}}{2}$ then $\gamma^{+}$ has a Landau pole in $I$.
\end{theorem}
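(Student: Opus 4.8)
The plan is to run the argument of Theorem~\ref{thm:1} almost verbatim, but on the self-adjoint equation \eqref{eq:12} for $v$ together with Lemma~\ref{lem:1}, while tracking the two sign changes that distinguish the $v$ case from the $u$ case. Before doing so I record that, as literally worded, the conclusion cannot be about $\gamma^{+}$: in the variable $L$ the equations \eqref{eq:5} and \eqref{eq:6} decouple, so $\gamma^{+}$ is governed by $P^{+}$ alone, whereas the hypothesis here constrains only $P^{-}$, which enters \eqref{eq:12} and hence controls $v$ and, through \eqref{eq:8}, the propagator anomalous dimension $\gamma^{-}$. The statement therefore contains an evident typo, and the quantity that acquires a Landau pole is $\gamma^{-}$; it is this that I will establish.

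First I would introduce the comparison solution. Let $\tilde v$ solve \eqref{eq:12} with $P^{-}$ replaced by the constant $\rho^{-}=\max_{L\in I}P^{-}(L)$. Since every value of $P^{-}$ on $I$ is strictly below $-\tfrac14$ we have $\rho^{-}<-\tfrac14$, which by the constant-coefficient analysis (the oscillatory row of Table~\ref{tab:2}) puts $\tilde v$ in the oscillatory regime, $\tilde v(L)=e^{(L-L_0)/2}\bigl[\tilde v(L_0)\cos(k(L-L_0))+\tfrac{\tilde v'(L_0)}{k}\sin(k(L-L_0))\bigr]$ with $k$ the oscillation frequency read off from that solution. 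Over $J$, whose length $2\pi/k$ is one full period, the bracketed sinusoid completes a full oscillation and so has at least two zeros; hence $\tilde v$ has two consecutive zeros inside $J\subseteq I$.

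Next I would treat the one genuinely new point—the ``slight difference'' flagged before the statement—namely the direction of the Sturm comparison. Writing both equations in the self-adjoint form $(e^{-L}y')'+q\,y=0$ with weight $e^{-L}>0$, the actual potential is $q=-e^{-L}P^{-}$ and the comparison potential is $q_0=-e^{-L}\rho^{-}$. Because $\rho^{-}$ is the \emph{maximum} of $P^{-}$ we have $P^{-}(L)\le\rho^{-}$, and the minus sign in \eqref{eq:12} reverses this to $-P^{-}(L)\ge-\rho^{-}$, i.e. $q\ge q_0$. This is precisely why one must take the maximum here, whereas for $\gamma^{+}$ (equation \eqref{eq:11}, potential $+e^{-L}P^{+}$) the minimum was correct: the extra sign flips the inequality and the max/min choice flips it back, so that in both cases the true solution carries the larger potential and oscillates at least as fast as the comparison solution. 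By the Sturm comparison theorem, $v$ then has a zero between the two zeros of $\tilde v$, hence a zero in $I$, and Lemma~\ref{lem:1} converts this zero into a Landau pole of $\gamma^{-}$ in $I$.

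The main obstacle is exactly this bookkeeping of signs: one must verify that the two reversals (the $-P^{-}$ of \eqref{eq:12} versus the $+P^{+}$ of \eqref{eq:11}, and the use of $\max$ versus $\min$) combine to give $q\ge q_0$ rather than $q\le q_0$, since only the former lets the faster-oscillating true solution inherit a zero from $\tilde v$. A secondary and purely routine check is that $\rho^{-}<-\tfrac14$ really does land the constant equation in the oscillatory regime and fixes the frequency $k$, so that the hypothesis $J\subseteq I$ guarantees a full period and therefore two zeros of $\tilde v$ inside $I$.
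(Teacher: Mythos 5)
Your proof is correct and takes essentially the same route as the paper's, whose proof of this theorem simply repeats Theorem~\ref{thm:1} with the Sturm comparison applied to the sign-flipped equation \eqref{eq:12}; your explicit check that $P^{-}\le\rho^{-}$ combined with the minus sign in \eqref{eq:12} gives $q\ge q_{0}$ is precisely the ``slight difference'' the paper's one-line proof alludes to. You are also right that the conclusion should read $\gamma^{-}$ rather than $\gamma^{+}$ --- that is a typo in the paper's statement.
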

\begin{proof}
The proof is exactly the same as Theorem \ref{thm:1} except one applies the Sturm comparison theorem to $-1\times$ \eqref{eq:11} where one $P^{-}(L)$ is $\rho^{-}$ and the other is our given $P^{-}(L)$ 
\end{proof}

It is also worth pointing out that we can weaken the hypotheses of the previous two theorems slightly and still guarantee the existence of Landau poles which we describe now.
\begin{theorem}\label{thm:3}
Suppose that there is some $L^{\ast}$ such that $\rho^{+}_{\infty} = \textrm{inf}_{L>L^{\ast}}P^{+}(L)>\frac{1}{4}$ then $\gamma^{+}$ has a Landau pole for some $\bar{L}>L^{\ast}$
\end{theorem}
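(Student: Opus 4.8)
The plan is to run the same Sturm comparison argument as in the proof of Theorem~\ref{thm:1}, now on the semi-infinite interval $(L^{\ast},\infty)$ rather than on a bounded interval of prescribed length. The point of the weakening is that the half-line is automatically ``long enough'': whatever the oscillation frequency of the comparison solution turns out to be, $(L^{\ast},\infty)$ contains arbitrarily many of its periods, so no length hypothesis need be imposed by hand. By Lemma~\ref{lem:1} it again suffices to exhibit a zero $\bar L>L^{\ast}$ of the solution $u$ of the self-adjoint equation \eqref{eq:11}.

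First I would set $k=\tfrac{1}{2}\sqrt{4\rho^{+}_{\infty}-1}$, which is real and strictly positive precisely because the hypothesis is the strict inequality $\rho^{+}_{\infty}>\tfrac14$. Let $\tilde u$ be the solution of \eqref{eq:11} with the constant coefficient $P^{+}(L)\equiv\rho^{+}_{\infty}$; as recorded in the constant-$P^{+}$ discussion this is
\[
\tilde u(L)=e^{L/2}\bigl[A\cos(kL)+B\sin(kL)\bigr],
\]
whose zeros coincide with those of the bracketed trigonometric factor and are therefore equally spaced $\pi/k$ apart. In particular $\tilde u$ has infinitely many zeros tending to $+\infty$, so I may choose two consecutive ones $z_1<z_2$ with $z_1>L^{\ast}$.

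On $[z_1,z_2]\subset(L^{\ast},\infty)$ both equations share the same leading coefficient $p(L)=e^{-L}$, while the zeroth-order coefficients satisfy $e^{-L}P^{+}(L)\ge e^{-L}\rho^{+}_{\infty}$ because $\rho^{+}_{\infty}=\inf_{L>L^{\ast}}P^{+}(L)$. The Sturm comparison theorem then forces the solution $u$ of \eqref{eq:11} to have a zero $\bar L\in[z_1,z_2]$; since $z_1>L^{\ast}$ this gives $\bar L>L^{\ast}$, and Lemma~\ref{lem:1} converts the zero of $u$ into a Landau pole of $\gamma^{+}$.

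The only place that requires care --- and the sole reason the hypothesis is $\rho^{+}_{\infty}>\tfrac14$ rather than merely $P^{+}(L)>\tfrac14$ pointwise --- is the existence of the two consecutive comparison zeros $z_1,z_2$. If one only had $\inf P^{+}=\tfrac14$, then $k=0$, the comparison solution degenerates to $e^{L/2}(A+BL)$, which has at most one zero, and the comparison step produces nothing; the strict bound on the infimum is exactly what guarantees a genuinely oscillatory majorant with infinitely many nodes on the half-line. I expect this bookkeeping to be the main (and essentially the only) subtlety, the rest being a direct transcription of the argument for Theorem~\ref{thm:1}.
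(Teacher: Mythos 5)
Your proposal is correct and takes essentially the same approach as the paper: both set $k=\tfrac{1}{2}\sqrt{4\rho^{+}_{\infty}-1}$ and apply the Sturm comparison theorem to \eqref{eq:11} against the constant-coefficient comparison solution with $P^{+}\equiv\rho^{+}_{\infty}$, whose oscillation supplies zeros beyond $L^{\ast}$. The only cosmetic difference is that the paper compares on the fixed interval $[L^{\ast},L^{\ast}+2\pi/k]$, which contains two zeros of the comparison solution, whereas you compare between two consecutive zeros $z_1<z_2$ with $z_1>L^{\ast}$; the substance is identical.
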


\begin{proof}
Let $k=\frac{\sqrt{4\rho^{+}_{\infty}-1}}{2}$ we again apply the Sturm comparison theorem on the two solutions to \eqref{eq:11} with the given $P^{+}(L)$ and the solution to \eqref{eq:11} where $P^{+}(L) = \rho^{+}$ on the interval $[L^{\ast},L^{\ast} + 2\pi/k]$ since the solution with $P^{+}(L) =\rho^{+}$ has two zeros in this interval the solution with our given $P^{+}(L)$ must have at least one in the same interval. 
\end{proof}

Once again the same theorem will apply to $v$ as well with a slightly different condition which is given below. 
\begin{theorem}\label{thm:4}
Suppose that there is some $L^{\ast}$ such that $\rho^{+}_{\infty} = \textrm{inf}_{L>L^{\ast}}P^{+}(L)<-\frac{1}{4}$ then $\gamma^{-}$ has a Landau pole for some $\bar{L}>L^{\ast}$
\end{theorem}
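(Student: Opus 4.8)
The statement is the $\gamma^{-}$ counterpart of Theorem~\ref{thm:3}, standing in the same relation to it as Theorem~\ref{thm:2} does to Theorem~\ref{thm:1}; in particular the superscripts printed on $\rho_\infty$ and $P$ should read $-$, and the $\inf$ should be a $\sup$. That these are transcription slips is forced by the content: in the variable $L$ the equations \eqref{eq:5}--\eqref{eq:6} decouple, so a zero of $v$ --- equivalently, by Lemma~\ref{lem:1}, a Landau pole of $\gamma^{-}$ --- is governed entirely by $P^{-}$ and cannot be produced by a hypothesis on $P^{+}$ alone. I therefore read the hypothesis as $\rho^{-}_{\infty}=\sup_{L>L^{\ast}}P^{-}(L)<-\tfrac14$ and show that $\gamma^{-}$ has a Landau pole for some $\bar L>L^{\ast}$.

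The plan is to mirror the proof of Theorem~\ref{thm:3}, but applied to the self-adjoint $v$-equation \eqref{eq:12} rather than to \eqref{eq:11}. The clean way to do this is the sign flip already used in the proof of Theorem~\ref{thm:2}: the substitution $P^{+}\mapsto -P^{-}$ carries \eqref{eq:11} exactly into \eqref{eq:12} (with $u$ playing the role of $v$), since $(e^{-L}v')'-e^{-L}P^{-}(L)\,v=0$ is the same as $(e^{-L}u')'+e^{-L}(-P^{-}(L))\,u=0$. Under this substitution the hypothesis $\inf_{L>L^{\ast}}P^{+}>\tfrac14$ of Theorem~\ref{thm:3} becomes $\inf_{L>L^{\ast}}(-P^{-})>\tfrac14$, i.e.\ $\sup_{L>L^{\ast}}P^{-}<-\tfrac14$, which is exactly our hypothesis, and the conclusion ``zero of $u$'' becomes ``zero of $v$''. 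Concretely I would set $\rho^{-}=\rho^{-}_{\infty}$, let $k=\tfrac12\sqrt{-4\rho^{-}-1}$ (well defined since $\rho^{-}<-\tfrac14$), and take $\tilde v$ to be the constant-coefficient solution of \eqref{eq:10} with $P^{-}\equiv\rho^{-}$ on the window $[L^{\ast},L^{\ast}+2\pi/k]$. Writing $\tilde v=e^{L/2}w$ reduces that equation to $w''+k^{2}w=0$, whose solution is $2\pi/k$-periodic with consecutive zeros spaced $\pi/k$ apart; hence $\tilde v$ has at least two zeros $L_{1}<L_{2}$ in the window.

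The comparison is then the only point requiring care, and it is where the direction of the inequality --- and thus the appearance of $\sup$ rather than $\inf$ --- is fixed. In the Sturm--Liouville form $(e^{-L}y')'+q(L)\,y=0$ the actual solution $v$ has weight $q(L)=-e^{-L}P^{-}(L)$ while the comparison solution $\tilde v$ has $\tilde q(L)=-e^{-L}\rho^{-}$; because $\rho^{-}=\sup_{L>L^{\ast}}P^{-}$ we have $P^{-}(L)\le\rho^{-}$ and therefore $q(L)\ge\tilde q(L)$ for all $L>L^{\ast}$. Thus $v$ is the faster-oscillating solution, and the Sturm comparison theorem places a zero of $v$ strictly between the two zeros $L_{1},L_{2}$ of $\tilde v$; call it $\bar L\in(L_{1},L_{2})\subseteq[L^{\ast},L^{\ast}+2\pi/k]$, so that $\bar L>L_{1}\ge L^{\ast}$. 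Invoking Lemma~\ref{lem:1} once more, this zero of $v$ is a Landau pole of $\gamma^{-}$ at $\bar L>L^{\ast}$, completing the argument. There is no deep obstacle here --- the content is identical to that of Theorem~\ref{thm:3} --- so the only things to get right are the sign in the weight $q$ (which is what forces $\sup$ and $P^{-}$ in the hypothesis) and the value of the frequency $k=\tfrac12\sqrt{-4\rho^{-}-1}$ coming from the $e^{L/2}$ substitution.
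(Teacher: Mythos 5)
Your proof is correct and takes essentially the same route as the paper, whose proof is just the one-line instruction to repeat the Sturm comparison argument of Theorem~\ref{thm:3} with the negated self-adjoint equation (i.e.\ with \eqref{eq:12} in place of \eqref{eq:11}), which is exactly what you carry out in detail via the substitution $P^{+}\mapsto -P^{-}$, the constant-coefficient comparison solution, and Lemma~\ref{lem:1}. Your reading of the misprinted hypothesis (superscripts $-$ rather than $+$, and $\sup$ rather than $\inf$) is the one forced by the pattern of Theorems~\ref{thm:1}--\ref{thm:3}, and your frequency $k=\tfrac12\sqrt{-4\rho^{-}-1}$ also silently corrects the analogous slip in the statement of Theorem~\ref{thm:2}.
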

\begin{proof}
The proof is the same as Theorem \ref{thm:3} except that we apply the Sturm comparison theorem with -\eqref{eq:11} in the same way as the previous theorem. 
\end{proof}
We now discuss the existence of divergences in the variable $x$, but in terms of the opposite question. Specifically when are there global solutions to the beta function as a function of $x$. This is interesting physically because it means the theory it corresponds to the theory being well behaved for all values of the coupling $x$.    
\subsection{Global Solutions of the Beta Function}
It will be useful to relate this to the beta function alone rather than both of the anomalous dimensions separately. In order to do this we will give an equation for the the beta function by combining equations \ref{eq:3} and \ref{eq:4} into a single differential equation. The result of this is the following equation 
\begin{equation}\label{eq:13}
    \frac{d\beta}{dx} = 1+\frac{2\beta(x)}{x}-x\frac{2(\gamma^{+}+\gamma^{-})^2+Q(x)}{\beta(x)}
\end{equation}
Where $Q(x) := P^{+}(x)+2P^{-}(x)$ In order to simplify the rest of the analysis we will need another lemma which will help us to focus on physical solutions. For a physical solution we want the beta function to be single valued and it is easy to find a condition where this won't happen and we therefore find a non physical solution for the beta function. 

\begin{lemma}
Suppose $x^{\ast}>0$ is a point such that $\beta(x^{\ast}) = 0$ and that $Q(x)>0$  then $x^{\ast}$ is a maximum of $x(L)$.
\end{lemma}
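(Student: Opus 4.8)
The plan is to treat $x$ as a function of $L$ through \eqref{eq:16} and apply the second derivative test at the point $L^{\ast}$ where $x(L^{\ast}) = x^{\ast}$. Since $\frac{dx}{dL} = \beta(x(L))$ and $\beta(x^{\ast}) = 0$ by hypothesis, the point $L^{\ast}$ is automatically a critical point of $x(L)$, so the whole lemma reduces to determining the sign of $\frac{d^2x}{dL^2}$ there: if it is negative we have a local maximum.

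First I would compute the second derivative by the chain rule, $\frac{d^2x}{dL^2} = \frac{d}{dL}\beta(x(L)) = \frac{d\beta}{dx}\frac{dx}{dL} = \beta\,\frac{d\beta}{dx}$, and then multiply \eqref{eq:13} through by $\beta$ to clear the denominator, obtaining the identity
\[
\frac{d^2x}{dL^2} = \beta\,\frac{d\beta}{dx} = \beta + \frac{2\beta^2}{x} - x\bigl(2(\gamma^{+}+\gamma^{-})^2 + Q(x)\bigr).
\]
Evaluating at $L^{\ast}$, where $\beta = 0$, the first two terms vanish and we are left with $\frac{d^2x}{dL^2}\big|_{L^{\ast}} = -x^{\ast}\bigl(2(\gamma^{+}+\gamma^{-})^2 + Q(x^{\ast})\bigr)$.

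The sign is then immediate: $x^{\ast} > 0$ by hypothesis, the term $2(\gamma^{+}+\gamma^{-})^2 \geq 0$ always, and $Q(x^{\ast}) > 0$ by hypothesis, so the bracket is strictly positive and $\frac{d^2x}{dL^2}\big|_{L^{\ast}} < 0$. Together with $\frac{dx}{dL}\big|_{L^{\ast}} = 0$, the second derivative test shows $L^{\ast}$ is a local maximum of $x(L)$, i.e. $x^{\ast}$ is a maximum value attained by $x(L)$.

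The one point that needs care — and the only real obstacle — is that \eqref{eq:13} is singular precisely at $\beta = 0$: the factor $\frac{d\beta}{dx}$ blows up there, so the product $\beta\,\frac{d\beta}{dx}$ is an indeterminate form $0\cdot\infty$ and one must justify that it genuinely has the stated finite value. I would handle this by observing that, as functions of $L$, the pair $\gamma^{\pm}(L)$ solves the regular (non-singular) system \eqref{eq:5}--\eqref{eq:6} and $x(L)$ solves \eqref{eq:16}, so $x(L)$ is genuinely $C^2$ and $\frac{d^2x}{dL^2}$ is continuous; the displayed identity then holds in the limit $\beta \to 0$ because its right-hand side extends continuously. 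Equivalently, and perhaps more transparently, one can bypass \eqref{eq:13} entirely and differentiate $\beta = x(\gamma^{+}+2\gamma^{-})$ directly using \eqref{eq:5}--\eqref{eq:6}; at $L^{\ast}$ one has $\gamma^{+} + 2\gamma^{-} = 0$, which annihilates every term carrying a factor of $\beta$ and reproduces the same value $-x^{\ast}(2(\gamma^{+}+\gamma^{-})^2 + Q(x^{\ast}))$ without any singularity ever appearing.
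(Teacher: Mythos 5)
Your proof is correct and follows essentially the same route as the paper: both compute $\frac{d^2x}{dL^2} = \beta\frac{d\beta}{dx}$ via \eqref{eq:13} evaluated at $\beta(x^{\ast})=0$, obtaining $-2x^{\ast}(\gamma^{+}+\gamma^{-})^2 - x^{\ast}Q(x^{\ast}) < 0$ and concluding by the second derivative test. Your additional care in justifying the $0\cdot\infty$ indeterminacy (e.g.\ by differentiating $\beta = x(\gamma^{+}+2\gamma^{-})$ directly in $L$ using \eqref{eq:5}--\eqref{eq:6}) is a welcome refinement of a step the paper passes over silently, but it does not change the substance of the argument.
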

\begin{proof}
Since $x^{\ast}$ is a zero of $\beta(x)$ by \eqref{eq:5} it follows that $x^{\ast}$ is a maximum or minimum by considering \ref{eq:13} and using the fact that $\beta(x^{\ast}) =0 $ it follows that 
\begin{equation*}
    \frac{d^2 x}{dL^2} = \beta\frac{d \beta}{dx} = -2x^{\ast}(\gamma^{+}+\gamma^{-})^2-x^{\ast}Q(x^{\ast})<0\end{equation*}
\end{proof}
This lemma means that if we are only interested in physical solutions (as we are) it is enough to consider solutions for the beta function which are strictly positive or negative since otherwise we will have solutions where there is a maximum possible $x$ as a function of $L$ meaning that our beta function must be multi valued. With this out of the way, we can discuss the theorem relating to the existence of global solutions for the beta function in $x$.
\begin{theorem}\label{thm:thm5}

Let $Q(x)$ be a $C^2$ positive function with $\gamma^{-}>0$ and $2\gamma^{+}+3\gamma^{-}$ having the same sign, then global solutions to (3.12) exist if there is some $x_0$ such that  
\begin{equation}\label{eq:30}
\int_{x_0}^{\infty}\: zQ(z) \: dz <\infty
\end{equation}
\end{theorem}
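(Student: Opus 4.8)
The plan is to prove that the physical solution $\beta$, which the preceding lemma shows is sign-definite, can neither reach the singular locus $\beta=0$ nor diverge to $+\infty$ at any finite $x$; by the standard continuation principle for ordinary differential equations this forces the solution to extend to all of $[x_0,\infty)$. Taking $\beta>0$ throughout, I would first pass to the squared variable $w=\beta^2$, for which \eqref{eq:13} becomes, after multiplying by $2\beta$,
\[
w' = 2\sqrt{w} + \frac{4w}{x} - 4x(\gamma^{+}+\gamma^{-})^2 - 2xQ(x).
\]
The purpose of this substitution is that the destabilizing linear term $4w/x$ is set up to cancel against the curvature term $-4x(\gamma^{+}+\gamma^{-})^2$ as soon as the latter is controlled.

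The key step is to bound $(\gamma^{+}+\gamma^{-})^2$ using the sign hypotheses. From \eqref{eq:4} one has $\gamma^{+}+\gamma^{-}=\beta/x-\gamma^{-}$ and $2\gamma^{+}+3\gamma^{-}=(\gamma^{+}+\gamma^{-})+\beta/x$, so the assumptions $\gamma^{-}>0$ and $2\gamma^{+}+3\gamma^{-}>0$ give the two-sided bound $-\beta/x<\gamma^{+}+\gamma^{-}<\beta/x$, that is $(\gamma^{+}+\gamma^{-})^2<w/x^2$. Inserting this bound makes the middle two terms cancel up to the inequality, and I obtain $w'>2\sqrt{w}-2xQ(x)>-2xQ(x)$. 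Integrating from $x_0$ yields $w(x)>w(x_0)-2\int_{x_0}^{x}sQ(s)\,ds$, and hypothesis \eqref{eq:30} makes the tail $\int_{x_0}^{\infty}sQ$ finite; choosing the initial datum so that $\beta(x_0)^2$ exceeds $2\int_{x_0}^{\infty}sQ$ then keeps $w$, and hence $\beta$, bounded away from $0$ for all $x>x_0$.

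To exclude blow-up to $+\infty$ I would discard the two manifestly nonpositive terms, giving $w'\le 2\sqrt{w}+4w/x$; since the right-hand side is at most linear in $w$ for large $w$, a comparison argument (for instance by examining $(w/x^4)'$) bounds $w$ by a constant multiple of $x^4$, so $\beta$ grows at most like $x^2$ and stays finite on every bounded $x$-interval. The lower and upper bounds together confine the solution to a compact subset of $(0,\infty)$ over each finite interval, and continuation then gives a solution defined for all $x\ge x_0$.

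I expect the main obstacle to be the two-sided estimate $|\gamma^{+}+\gamma^{-}|<\beta/x$: it is exactly what renders the curvature term harmless, and it is here that both sign hypotheses are consumed, so any weakening of them must be checked against this inequality. A secondary subtlety is that the lower-bound argument only produces global solutions for initial data with $\beta(x_0)$ large enough compared with the tail of $xQ$; turning this into a statement about the physically normalized solution, and confirming compatibility with the small-$x$ perturbative behaviour, is where additional care would be required.
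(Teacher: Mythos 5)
Your proposal is correct and is essentially the paper's own argument in different clothing: your two-sided bound $|\gamma^{+}+\gamma^{-}|<\beta/x$ is, by difference of squares, exactly the paper's identity $(\gamma^{+}+2\gamma^{-})^2-2(\gamma^{+}+\gamma^{-})^2 = -(\gamma^{+}+2\gamma^{-})^2+2\gamma^{-}(2\gamma^{+}+3\gamma^{-})$ with the sign hypotheses used to discard the positive term, and both arguments reduce to the same key inequality $\frac{d}{dx}\beta^2 \geq -2xQ(x)$, integrated against the tail $\int zQ\,dz$ with the same large-enough choice of $\beta(x_0)$ (the paper phrases the endgame as a contradiction at a hypothetical zero $\beta(x^{\ast})=0$ rather than as a standing lower bound). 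The only substantive difference is that you treat the finite-$x$ blow-up alternative with an explicit comparison bound $w\lesssim x^4$, where the paper dismisses it in one sentence, so your write-up is if anything slightly more complete on that point.
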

This theorem gives an integrability condition for global solutions to exist. The proof will follow \cite{van_Baalen_2009} very closely.
\begin{proof}
First, let $x_0$ be as in the theorem statement, $\gamma^{\pm}_0 = \gamma^{\pm}\left(x_0\right)$ and $\epsilon>0$ choose
\begin{equation}\label{init_conds_eq}
\gamma^{+}_0 + 2\gamma^{-}_0  = \frac{1}{x_0}\left(2\int_{x_0}^{\infty}\: zQ(z)\; dz +\epsilon^2\right)^{\frac{1}{2}}
\end{equation}
 
Note that for global solutions to not exist we must have that for some $x^{\ast}<\infty$ we have either $\gamma^{+}\left(x^{\ast}\right) + 2\gamma^{-}\left(x^{\ast}\right) \rightarrow \infty$ or  $\gamma^{+}\left(x^{\ast}\right) + 2\gamma^{-}\left(x^{\ast}\right) = 0$. The solution can't have a pole since equation \ref{eq:30} bounds the solution,thus assume for a contradiction that $\gamma^{+}\left(x^{\ast}\right) + 2\gamma^{-}\left(x^{\ast}\right) = 0$ so the third term is positive too, now consider,
\begin{equation*}
\frac{1}{2}\frac{d}{dx}\left(\gamma^{+}+2\gamma^{-}\right)^2 = \frac{\gamma^{+}+2\gamma^{-}}{x}+ \frac{\left(\gamma^{+}+2\gamma^{-}\right)^2}{x} - \frac{2\left(\gamma^{+}+\gamma^{-}\right)^2}{x} - \frac{Q(x)}{x} 
\end{equation*}
Simplifying we have 
\begin{equation*}
\frac{1}{2}\frac{d}{dx}\left(\gamma^{+}+2\gamma^{-}\right)^2 = \frac{\gamma^{+}+2\gamma^{-}}{x} - \frac{\left(\gamma^{+}+2\gamma^{-}\right)^2}{x} + \frac{2\gamma^{-}\left(2\gamma^{+}+3\gamma^{-}\right)}{x}-\frac{Q\left(x\right)}{x}
\end{equation*}
Or, 
    \begin{equation*}
\frac{1}{2}\frac{d}{dx}\left(\gamma^{+}+2\gamma^{-}\right)^2 \geq - \frac{\left(\gamma^{+}+2\gamma^{-}\right)^2}{x} -  \frac{Q\left(x\right)}{x}
\end{equation*}
Where the inequality follows since both terms removed are positive rearranging gives,
\begin{equation*} 
x^2\frac{d}{dx}\left(\gamma^{+}+2\gamma^{-}\right)^2 + 2x\left(\gamma^{+}+2\gamma^{-}\right)^2 \geq -2x Q (x)
\end{equation*} 
Or, 
\begin{equation} \label{integral_eq}
\frac{d}{dx}\left(x^{2}\left[\gamma^{+}+2\gamma^{-}\right]^2\right) \geq -2 x Q(x)
\end{equation}
Integrating equation \ref{integral_eq} on $\left[x_0,x^{\ast}\right]$ and using \ref{init_conds_eq} gives
\begin{equation}
\gamma^{+}+2\gamma^{-}\geq \frac{1}{x^2}\left(x_{0}^2 \left[\gamma^{+}_0 + 2\gamma^{-}_0\right]-\int_{x_0}^{x^{\ast}}\: zQ(z) \; dz \right)  > \left(\frac{\epsilon}{x_0}\right)^2 > 0 
\end{equation}
Contradicting our assumption that $\gamma^{+}\left(x^{\ast}\right) + 2\gamma^{-}\left(x^{\ast}\right) = 0$. This shows that given our assumptions \ref{eq:30} is enough to give global solutions.
\end{proof}

This concludes our study of the existence of Landau poles. In the final part of this section we will give a summary of these results and give some numerical examples of the solutions to the equations. 

\subsection{Numerical results and examples}
To begin this section we give a brief summary of the relationship between the variables $x$ and $L$. Notice that by \ref{eq:16} we have that 
\begin{equation*}
    x = x_0 + \int_{L_0}^{L} \beta(z) \: dz
\end{equation*}
Thus if there is a Landau pole in $L$ we see that $x\rightarrow \infty$ and thus there is no divergence in $x$ since it it defined for all positive values. Conversely if there is a divergence in $x$ then we can see that the rearranging the same equation gives 
\begin{equation*}
    L = L_0 + \int_{x_0}^{x}\frac{1}{\beta(x)} \: dx
\end{equation*}
Thus we can see that if $x^{\ast}$ is a divergence in $x$ then assuming that there are no zeros of $\beta(x)$ in $[x_0,x^{\ast}]$ this will be a Landau pole of $\beta(L)$ as well since the integral defining $L$ will converge to a finite value as $\beta(x)\rightarrow \infty$ which can be seen easily using the mean value theorem for definite integrals.  

\medskip

Using these two relationships between Landau poles and divergences we are going to tabulate the results of our theorems in this section as way of recapping them all together 

\medskip

\begin{tabular}{c|c|c}
    Conditions of Thm X are Satisfied & $\beta(x)$ & $\beta(L)$ \\ 
    \hline
     \ref{thm:1}& Global Solutions   & Landau Pole \\
     
     \ref{thm:thm5} & Global Solutions & No Landau Pole
\end{tabular}

Having these general theorems proven we now move on to some numerical examples of the types of solutions which we can find. Even with the simple case where $P^{\pm}(L)$ is a constant we can find all three possible types of solutions which can possibly happen. As we have seen there are three possible types of solutions, there are solutions with no Landau poles for any initial conditions, solutions where there are Landau poles for any initial conditions and solutions where there may or may not be Landau poles depending on initial conditions. Figure~\ref{fig:figure1} shows three different solutions to the equations which exhibit each of these behaviors. This could be deduced from the mathematical form of the solutions alone but this also provided a useful way of verifying our code which generated the numerical solutions in the case where we do not have a closed form solution. Our first numerical example is simply to serve as a test that our code to generate numerical solutions work well as well to illustrate that even in the simple case where $P(L)$ is a constant there are solutions with a single Landau pole, an infinite number of Landau poles and solutions with no Landau poles. 
\begin{figure}
    \centering
    \includegraphics[width=\textwidth,height=20cm,keepaspectratio]{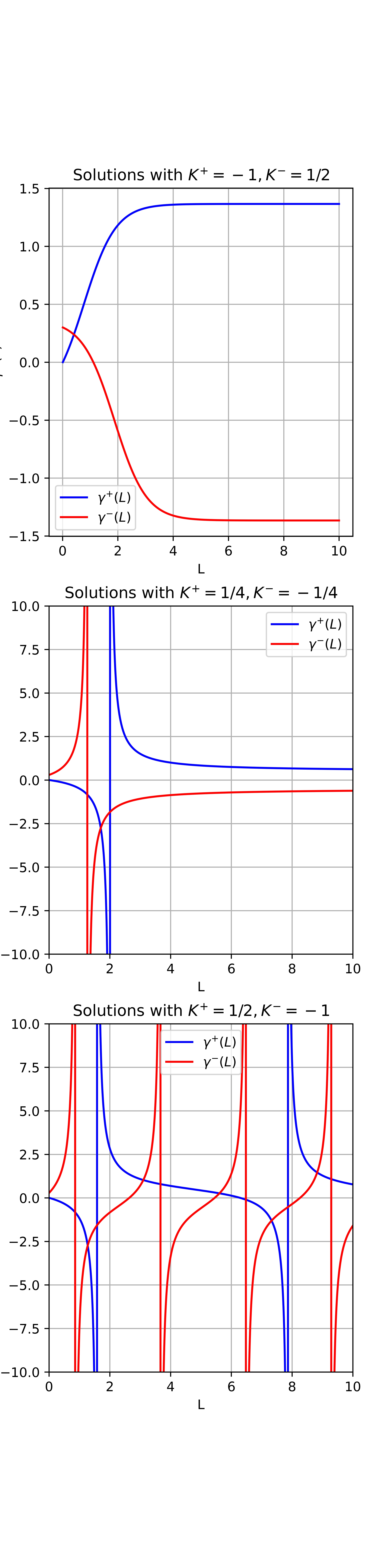}
    \caption{Three different solutions plotted for three different constant values for  each showing the different possibilities for existence of Landau poles.}
    \label{fig:figure1}
\end{figure}

Next we give some examples of the solutions for the functions for $u(L)$ again showing the different behavior of Landau poles. Figure~\ref{fig:figure2} gives an example without a Landau pole which was generated from a constant $P^{+}(L) = -1$ which as we already know will have no Landau pole. Even in cases where we don't have a closed form solution for the anomalous dimension these solutions for $u$ could also tell us there are no Landau poles since the solution doesn't cross zero.  
\begin{figure}
    \centering
    \includegraphics[width=\linewidth]{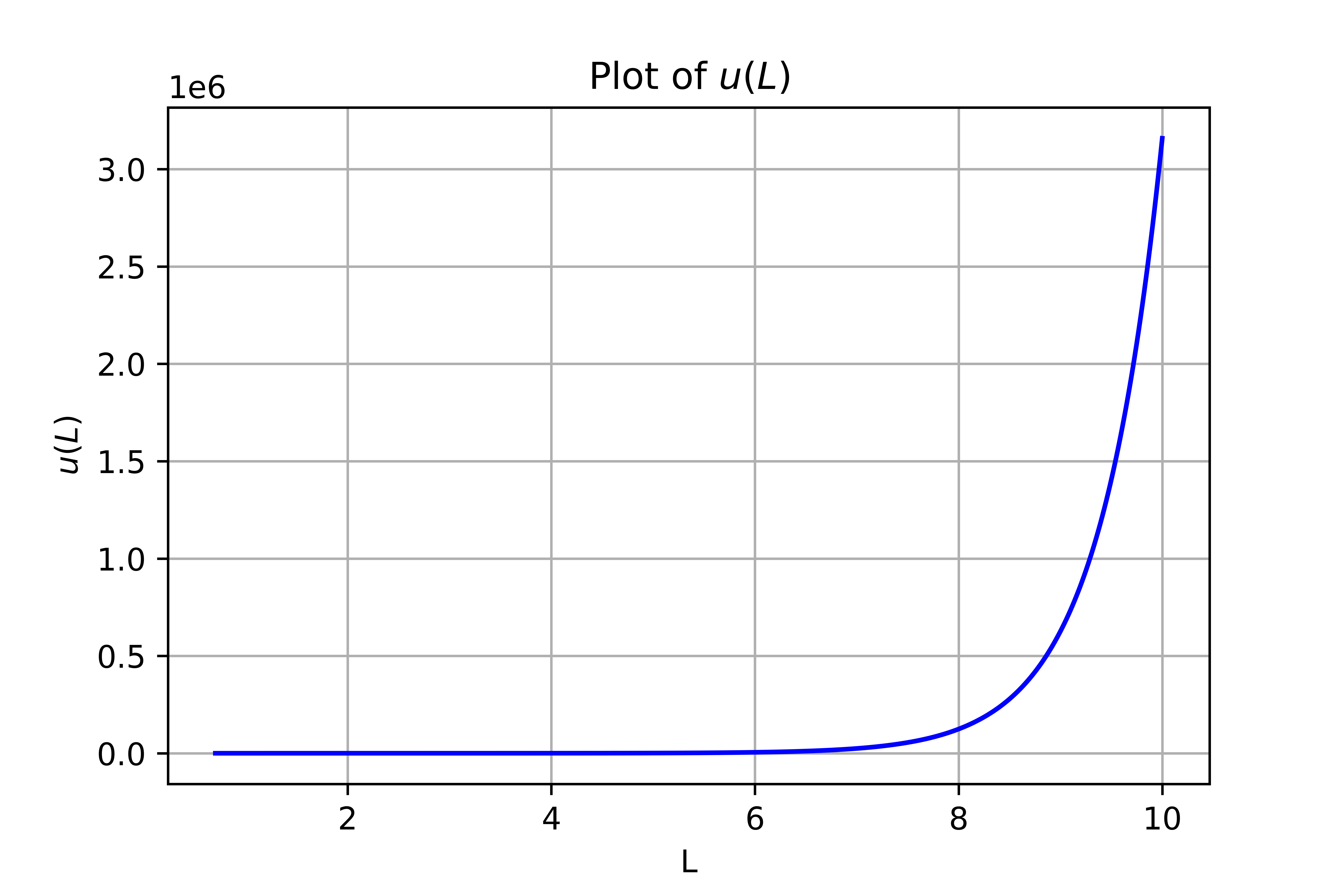}
    \caption{A solution for $u(L)$ which doesn't have a Landau pole which can be seen by the fact that the solution for $u(L)$ never crosses zero. }
    \label{fig:figure2}
\end{figure}

Figure~\ref{fig:figure3} gives an example with an infinite number of Landau poles in a solution for $\gamma^{\pm}$ which we can see from the solution for $u(L)$ since the solution for $u(L)$ is oscillatory and crosses zero multiple times (in fact infinite times) each of these is a Landau pole and in fact this is another solution with constant $P^{+}(L)=1$ and we already know this solutions has an infinite number of Landau poles and these correspond to the zeros of the oscillatory solution for $u(L)$.
\begin{figure}
    \centering
    \includegraphics[width=\linewidth]{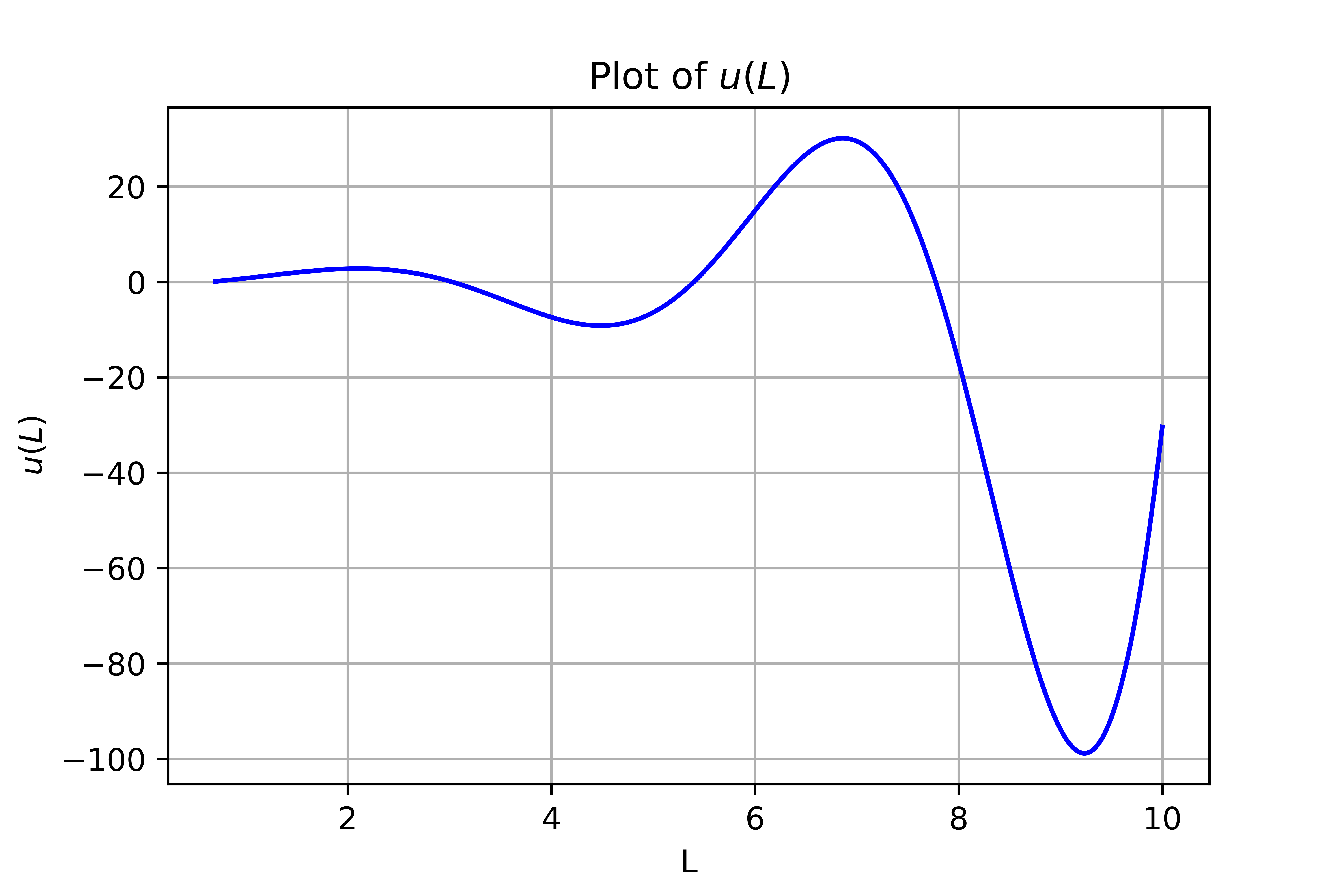}
    \caption{A solution for $u(L)$ where the corresponding solution for the anomalous dimension has an infinite number of Landau pole as can be seen by the oscillatory solution}
    \label{fig:figure3}
\end{figure}

In addition to this we also investigated the different behaviours for different initial conditions and numerically with a fixed $P^{+}(x) = 0.0833x^2 + 0.1874x^3$ and $P^{-}(x) = 1.5x + 1.8336x^2 - 3.728x^3$ and here we see an interesting result that there is an almost linear border between solutions which have Landau poles and those which do not. See Figure~\ref{fig colour plot}. We also note that this behavior persists with different initial conditions for $x$, see Figure~\ref{fig colour plot}. These plots show that indeed the regions of $\left(\gamma^{+}_0,\gamma^{-}_0\right)$ which produce divergences in $x$ appear to be smoothly separated from those which don't. As we expect solutions which have initial opposite signs will eventually have divergences (this is because the divergence happens when the denominator of \ref{eq:2} or \ref{eq:3} vanishes). It is left to a future work to give a complete description of these boundaries between solutions with divergences and without divergences in this and other models.

\begin{figure}
	\centering
	
	\includegraphics[scale=0.6]{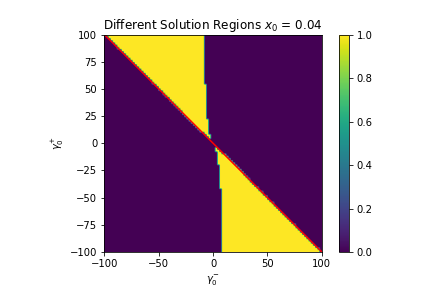}
	\includegraphics[scale=0.6]{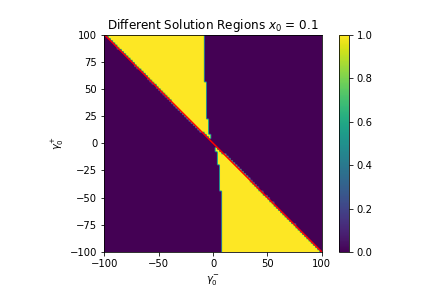}
	\includegraphics[scale=0.6]{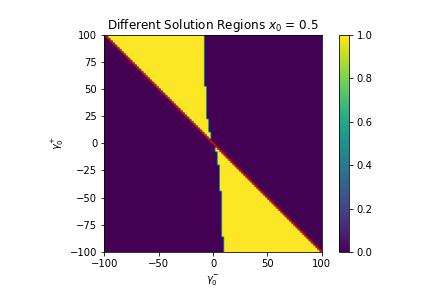}

	\caption{An example of the regions generated by the different initial conditions. The yellow region has solutions with one non-trivial zero of the beta function and the purple region has no non-trivial zeros.
	The green line is the border between solutions without divergences in $x$ and solutions which don't diverge in finite $x$. The red  line is the line $\gamma^{+}_0 = -\gamma^{-}_0$.}
	\label{fig colour plot}
\end{figure}

\section{The cut/core cointeraction}\label{sec coact}
\subsection{Tree independent formulation of the coaction}\label{sec tree indep}
Let us now return to the observation that the description of the cointeration of the cut bialgebra and the core Hopf algebra in \cite{Algebraic} was defined in terms of graph spanning tree pairs, which has some benefits, particularly for relating to other mathematical constructions, but also a disadvantage in the explosion of terms.

For the coproduct $\Delta$ of Section~\ref{subsec orig coact} the formulation without a spanning tree is simply the core coproduct, see Definition \ref{def core} and \eqref{eq modified core coprod}.  This could be paired with either the usual disjoint union product of the core Hopf algebra or an internal product similar to $m$ of Section~\ref{subsec orig coact}.  In the latter case, we would have a fixed mother graph $X$ and all other graphs $G$ under consideration would be subgraphs of $X$.  These subgraphs would not be considered up to isomorphism, rather they would maintain the information of how they are subgraphs of $X$.  Then the product would be union in $X$.

More interesting is how to define $\rho$ without a spanning tree.  To state and prove the result we will need some notation.  

In Section~\ref{subsec orig coact}, we observed that the vector space spanned by graphs is embedded in the vector space spanned by graph spanning tree pairs by summing over all spanning trees.  Let us now give this embedding a name.
\begin{definition}
For a connected graph $G$ define
\[
s(G) = \sum_{\substack{T \text{ spanning} \\ \text{tree of } G}} (G,T)
\]
\end{definition}

Also, recall from Section~\ref{subsec orig coact}, that if we had a graph tree pair $(G,T)$ and a subset $B$ of edges not in $T$ then we defined $\gamma_B$ to be the subgraph defined by taking the union of the fundamental cycles defined by the edges of $B$.  Recall further that in Section~\ref{subsec orig coact} for each $(G,T)$ we had a space of formal symbols $B^{[A_1, A_2]}$.  We had discussed the intended interpretation of these symbols, but now we need a name for the map instantiating this interpretation.  To this end define the following two things.  First given $(G,T)$ and $A\subseteq E(T)$ let
$C_A$
be the edges of the cut induced by $A$; that is, $C_A$ is those edges whose ends are in different components of $T-A$.  Next, we make the following definition of the interpretation of the symbol.
\begin{definition}
Given $(G,T)$, $B\subseteq E(G)-E(T)$, and $A_1, A_2, \subseteq E(T)$, define
\[
g(B^{[A_1, A_2]}) = \big((\gamma_B - C_{A_1})/(T-A_2), T\cap (A_2-A_1)\big).
\]
\end{definition}
To say this another way, $g(B^{[A_1, A_2]})$ is the pair of 
\begin{enumerate}
    \item the graph defined by $B$ with the cut defined by $A_1$ done and the edges in $E(T)-A_2$ contracted, along with 
    \item the spanning forest given by those edges of $T$ that are still present in the graph.
\end{enumerate} 
These remaining edges of $T$ give a spanning tree of each component of the graph.
Note that $g(B^{[A_1, A_2]})$ depends on the $(G,T)$ relative to which $B^{[A_1, A_2]}$ is defined even though $G$ and $T$ do not appear in the notation.

We will also use $\pi_1$ to denote the projection onto the first component of a pair.  In particular $\pi_1(G,T) = G$.  As a final prelimiary definition, it is time to formally specify what we mean by a cut of a graph
\begin{definition}
A \emph{cut} of a graph $G$ is a set of edges $C$ so that every edge of $C$ has its ends in different components of $G-C$.  If the connected components of $G-C$ are $G_1, G_2, \ldots, G_k$ then we say we have a cut of $G$ into $G_1, G_2, \ldots, G_k$.
\end{definition}

With these definitions at hand we can now define a version of $\rho$, which we call $\widetilde{\rho}$, that acts on $G$ not $(G,T)$.  
\begin{definition}
For a connected graph $G$,
\[
\widetilde{\rho}(G) = \sum_{\substack{C \text{ cut of } G \text{ into}\\ G_1, G_2, \ldots, G_k}}\left(\sum_{\substack{T_i \text{ spanning}\\\text{tree of } G_i}} G/(T_1\cup T_2\cup \cdots \cup T_k)\right) \otimes G_1\cup G_2\cup \cdots \cup G_k
\]
\end{definition}

The claim now is that this $\widetilde{\rho}$ should be seen as the tree independent formulation of $\rho$.  The formal statement is as follows.
\begin{theorem}
\[
(s\otimes \text{id})\circ \widetilde{\rho}(G) = (\text{id}\otimes \pi_1)\circ (g\otimes g)\circ\rho \circ g^{-1} s(G)
\]
where $g^{-1}(G,T)$ is interpreted relative to $(G,T)$; specifically, 
\[
g^{-1}(G,T) = (E(G)-E(T))^{[\emptyset, E(T)]}
\]
as a symbol relative to $(G,T)$.
\end{theorem}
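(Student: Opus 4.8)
The plan is to expand both sides of the asserted identity into explicit sums indexed by combinatorial data and then exhibit a bijection between the two index sets under which the summands agree term by term. Everything is linear, so it suffices to track where a single spanning-tree term $(G,T)$ of $s(G)$ is sent; I will treat $G$ as connected and 1PI throughout, which is the standing context and is exactly what guarantees that the union of all fundamental cycles is all of $G$, i.e. $\gamma_B = G$ for $B = E(G)-E(T)$.

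First I would compute the right-hand side. Applying $g^{-1}$ to $(G,T)$ gives the symbol $B^{[\emptyset, E(T)]}$ relative to $(G,T)$, and $\rho$ expands it as $\sum_{A\subseteq E(T)} B^{[\emptyset, A]} \otimes B^{[A, E(T)]}$. The key step is to unwind $g$ on each tensor factor using the definition $g(B^{[A_1,A_2]}) = \big((\gamma_B - C_{A_1})/(T-A_2),\, T\cap(A_2 - A_1)\big)$. Since $\gamma_B = G$ and $C_\emptyset = \emptyset$, the left factor becomes $g(B^{[\emptyset,A]}) = (G/(E(T)-A),\, A)$, while on the right factor nothing is contracted ($E(T)-A_2 = \emptyset$), giving $g(B^{[A,E(T)]}) = (G - C_A,\, E(T)-A)$. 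Then $\mathrm{id}\otimes\pi_1$ discards the tree on the second factor, so the right-hand side collapses to $\sum_T \sum_{A\subseteq E(T)} \big(G/(E(T)-A),\, A\big) \otimes (G - C_A)$. In parallel I would expand the left-hand side directly from the definition of $\widetilde{\rho}$ followed by $s\otimes\mathrm{id}$, obtaining $\sum_C \sum_{\{T_i\}} \sum_{T'} \big(G/(T_1\cup\cdots\cup T_k),\, T'\big) \otimes (G_1\cup\cdots\cup G_k)$, the sum running over cuts $C$ of $G$ into components $G_1,\dots,G_k$, over choices of spanning tree $T_i$ of each $G_i$, and over spanning trees $T'$ of the contracted graph $G/(T_1\cup\cdots\cup T_k)$.

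The heart of the argument is the bijection between the index sets $\{(T,A)\}$ and $\{(C,\{T_i\},T')\}$. From a pair $(T,A)$ I would produce the cut $C := C_A$; the spanning forest $T_1\cup\cdots\cup T_k := E(T)-A$, whose components are precisely spanning trees of the $G_i$ because deleting the $|A|$ edges of $A$ from the tree $T$ splits it into $|A|+1$ pieces carrying the correct vertex sets; and the spanning tree $T' := A$ of $G/(E(T)-A)$. Conversely, from $(C,\{T_i\},T')$ I would set $A$ to be the lift of $T'$ to edges of $G$ and $T := (T_1\cup\cdots\cup T_k)\cup A$. Under these mutually inverse maps one has $E(T)-A = T_1\cup\cdots\cup T_k$ and $C_A = C$, so the graph-tree pairs on the left factors and the disjoint unions on the right factors coincide summand by summand, yielding the equality.

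The main obstacle I anticipate is verifying that the reverse assignment is well defined, namely that $T = (T_1\cup\cdots\cup T_k)\cup A$ really is a spanning tree of $G$ with $A\subseteq E(T)$ and $C_A = C$. Here I would spend care on two points: an edge count, $\sum_i\big(|V(G_i)|-1\big) + (k-1) = |V(G)|-1$, to confirm the correct size; and an acyclicity-plus-connectivity check, using that $T'$ is a spanning tree of the contraction to rule out cycles both within components (where the $T_i$ are trees) and across them (where the lift of $T'$ is acyclic), and to see that the lifted edges reconnect the $k$ contracted pieces. I would also flag at the outset the 1PI hypothesis as the ingredient making $\gamma_B = G$, which is what lets the two right-hand factors reduce cleanly to $G/(T_1\cup\cdots\cup T_k)$ and $G_1\cup\cdots\cup G_k$.
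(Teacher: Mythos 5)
Your proposal is correct and takes essentially the same route as the paper: the paper's proof is exactly this computation, unwinding $g^{-1}$, $\rho$, $g\otimes g$, and $\mathrm{id}\otimes\pi_1$ on $s(G)$ to get $\sum_T\sum_{A\subseteq E(T)}(G/(T-A),A)\otimes(G-C_A)$, and then reindexing via the same key bijection you describe, namely that a spanning tree $T$ together with a subset $A$ inducing a cut $C$ is the same data as a cut $C$, spanning trees $T_i$ of its components, and a spanning tree of $G/(T_1\cup\cdots\cup T_k)$. Your write-up is if anything slightly more careful than the paper's, which states this correspondence as a one-sentence ``key observation'' without the edge-count and acyclicity checks you flag, and without explicitly noting that 1PI-ness is what gives $\gamma_B=G$.
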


\begin{proof}
The proof is by direct computation.  Let $E_L = E(G)-E(T)$.  Then
{\allowdisplaybreaks
\begin{align*}
    & (\text{id}\otimes \pi_1)\circ (g\otimes g)\circ\rho \circ g^{-1} s(G) \\
    & = (\text{id}\otimes \pi_1)\circ (g\otimes g)\circ\rho\left(\sum_{\substack{T \text{ spanning}\\\text{tree of } G}}E_L^{[\emptyset, E(T)]}\right) \\
    & = (\text{id}\otimes \pi_1)\sum_{\substack{T \text{ spanning}\\\text{tree of } G}} (g\otimes g) \left(\sum_{\emptyset \subseteq A \subseteq E(T)} E_L^{[\emptyset, A]} \otimes E_L^{[A, E(T)]} \right)\\
    & =(\text{id}\otimes \pi_1) \sum_{\substack{T \text{ spanning}\\\text{tree of } G}} \left(\sum_{\emptyset \subseteq A \subseteq E(T)} (G/(T-A), A) \otimes (G-C_A, T-A) \right)\\
    & =(\text{id}\otimes \pi_1)  \sum_{\text{cut } C \text{ of } G} \sum_{\substack{T \text{ spanning tree}\\\text{of } G \text{ such that}\\\text{some subset}\\\text{of } E(T) \text{ induces } C}}(G/(T-(T\cap C)), T\cap C) \otimes (G-C, T-C) \\
    & = \sum_{\text{cut } C \text{ of } G} \Bigg(\sum_{\substack{T \text{ spanning tree}\\\text{of } G \text{ such that}\\\text{some subset}\\\text{of } E(T) \text{ induces } C}}(G/(T-(T\cap C)), T\cap C)\Bigg) \otimes (G-C) \\
    & =\sum_{\substack{\text{cut of } G \text{ into}\\ G_1, G_2, \ldots, G_k}} \Bigg(\sum_{\substack{T_i \text{ spanning}\\\text{tree of } G_i \\ A \text{ spanning tree} \\\text{of } G/(T_1\cup \cdots \cup T_k)}}(G/(T_1\cup\cdots \cup T_k)), A)\Bigg) \otimes (G_1\cup\cdots \cup G_k)\\
    & =\sum_{\substack{\text{cut of } G \text{ into}\\ G_1, G_2, \ldots, G_k}}\left( \sum_{\substack{T_i \text{ spanning}\\\text{tree of } G_i }}\left( \sum_{\substack{ A \text{ spanning tree} \\\text{of } G/(T_1\cup \cdots \cup T_k)}}(G/(T_1\cup\cdots \cup T_k)), A) \right)\right)\otimes (G_1\cup\cdots \cup G_k)\\
    & = \sum_{\substack{\text{cut of } G \text{ into} \\ G_1, G_2, \ldots, G_k}} \left(\sum_{\substack{T_i \text{ spanning}\\\text{tree of } G_i }} s(G/(T_1\cup \cdots \cup T_k))\right)\otimes (G_1\cup\cdots \cup G_k)\\
    & = (s\otimes \text{id})\widetilde{\rho}(G)
\end{align*}
}
The key observation underlying the calculation is that the information of a spanning tree and a cut induced by a subset of edges of the spanning tree is the same as the information of a spanning tree of each component after the cut along with a spanning tree of the graph with each of those components contracted.
\end{proof}

If we wish to have certain edges of the graph not be able to be tadpoles, then we simply forbid cuts and trees which would cause such tadpoles.  The most physically relevant case is when we disallow fermion tadpoles, in which case the cuts in the definition of $\widetilde{\rho}$ should be required to cut all internal fermion loops and the $T_i$ in the definition of $\rho$ should be required to use all remaining fermion edges.  To see that this is the correct condition, first suppose an internal fermion loop was uncut.  Then this fermion loop would be entirely in one of the $G_i$, but then any spanning tree $T_i$ of $G_i$ does not contain at least one edge of the fermion loop, and since contracting a tree of $G_i$ leaves the rest of the edges of $G_i$ as tadpoles, then we get a fermion tadpole.  On the other hand if all fermion loops are cut, then in each $G_i$ the remaining fermion edges contain no cycles so at least one spanning tree exists that uses all the fermion edges, and there will be fermion tadpoles precisely if some fermion edge is not in the tree.

Note that cutting all fermion loops doesn't forbid cutting the same loop additional times, if some non-fermion edges are appropriately placed.  For example consider cutting the four fermion edges slashed in red in Figure~\ref{fig:QEDcut}.  This is a valid cut for the action of $\widetilde{\rho}$ resulting in two connected components with one external edge each, but cuts the same fermion loop multiple times.

\begin{figure}
    \centering
    \includegraphics{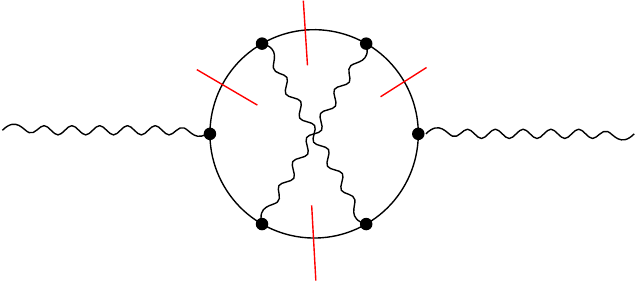}
    \caption{A cut that cuts the same fermion loop multiple times.}
    \label{fig:QEDcut}
\end{figure}

\subsection{Combinatorial considerations towards the interplay between $\rho$ and IR limits with soft photons}

Let us now consider some of the combinatorial considerations that would underlie generalizing the calculations of Chapter 4 of \cite{KlannMSc}.  To understand that chapter, we first need the \emph{Galois coaction} which is constructed by first applying $\rho$ and then applying renormalized Feynman rules viewed as taking values in motivic integrals on the left and applying cut Feynman rules giving de Rham periods on the right, see \cite{Algebraic, KlannMSc}.  Here we will stick to applying $\widetilde{\rho}$ but will go beyond pure diagrammatics by then using graph properties which can be derived from the Feynman rules in the Galois coaction to do further combinatorial simplification.  The idea is to lay out how far we can get with pure combinatorics before one would need to look into the details of the integrands.

In Chapter 4 of \cite{KlannMSc} Klann calculates the Galois coaction on the two loop photon self energy diagrams in QED and then takes an IR limit where the internal photon has very small energy, what is sometimes called a \emph{soft photon}.  He then notices that the result of the Galois coaction simplifies, with only one graph appearing on the left hand side of the tensor and the expression on the right being the diagrams contributing to the cancellation of IR singularities given by the Kinoshita-Lee-Nauenberg (KLN) theorem at two loops.  In this way he demonstrates the compatibility of the KLN theorem and the coaction at two loops.  The calculations are quite lengthy, in part because of the use of $\rho$ instead of $\widetilde{\rho}$, so let us return to this calculation now that we have $\widetilde{\rho}$ in hand, at least as far as the combinatorial considerations of the calculation.

Let us first consider the case where $G$ is a connected 1PI graph in QED with two external photon edges and no other external edges.  These are the connected 1PI graphs contributing to the photon self energy. We wish to understand $\widetilde{\rho}(G)$.  With no extra assumptions we cannot do any better than \eqref{eq cut coprod} for an expression for $\widetilde{\rho}(G)$, however there are two important consequences of applying Feynman rules that we can easily encorporate combinatorially.  The first consequence is that if any connected component after the cut has no external edges then the entire expression will be $0$.  This is due to the renormalization conditions that are used in this area (and is an argument for using kinematic-type renormalization schemes).  Combinatorially, this is an easy yet very useful condition to incorporate into $\widetilde{\rho}(G)$ -- instead of summing over all cuts, we only need to sum over cuts into two pieces for which one external edge is in each piece giving
\[
\sum_{\substack{C \text{ cut of } G \text{ into}\\ G_1 \text{ and  } G_2 \\ \text{separating external edges}}}\left(\sum_{\substack{T_i \text{ spanning}\\\text{tree of } G_i}} G/(T_1\cup T_2)\right) \otimes G_1\cup G_2
\]
The graphs on the left hand side are now of a quite particular form.  They all have exactly two vertices, one for each external edge.  There is a multiple edge consisting of at least two edges between the two vertices, and there may also be self-loops (tadpoles) at one or both of the vertices.

Another property of the Feynman rules is that fermion tadpoles vanish.  As discussed in Section~\ref{sec tree indep} this can be incorporated into the combinatorics or they can be removed as a separate step after applying $\widetilde{\rho}$.  In any case, as remarked at the end of Section~\ref{sec tree indep}, forbidding fermion tadpoles implies that all the cuts $C$ in the sum must cut all of the fermion loops of $G$.  The contraction of the spanning trees $T_1$ and $T_2$ also must not create fermion loops, and so all remaining internal fermions in $G_1$ and $G_2$ (which necessarily form forests since all fermion loops were cut) must be in $T_1$ and $T_2$.  

Suppose $G$ contains $k$ fermion loops and has loop number $\ell$.  The edges contracted to give the left hand sides in $\widetilde{\rho}$ are the edges not in a spanning forest of $G$ with two trees.  There are always $\ell+1$ such edges, so all graphs on the left hand sides in $\widetilde{\rho}$ have $\ell+1$ edges.  Of these edges, an even number at least $2k$ must be fermions going between the two vertices and the remaining are photons which may either be self-loops on either side or also go between the vertices.  The example in Figure~\ref{fig:QEDcut} shows how cuts with more that $2k$ fermions can appear; in that example $k=1$.

Next consider the effect of taking an IR limit where some of the photons are taken to be soft.  In this soft limit, a photon going between the two vertices of a left hand side graph does not transfer any momentum, and so it can be factored out.  Tadpoles, whether soft or not, also don't transfer momentum and so can factor out.  See Section 4.4 of \cite{KlannMSc} for the calculations demonstrating this at two loops.  Now more of the graphs on the left hand side in $\widetilde{\rho}(G)$ become equivalent as soft photons in this limit are the same whether they are appearing as tadpoles or as edges between the two vertices.  Graphically we can draw edges which factor out as separate connected components like in \cite{KlannMSc} equation 4.63.  

A final consideration is whether or not we can include soft photons in $T_1$ or $T_2$.  The edges of $T_1$ and $T_2$ are contracted on the left hand side, but contracting an edge is physically saying that it does not propagate and so it puts no constraint on how the momenta of its neighbours flow between themselves.  This is directly opposed to taking a soft limit, and so, similarly to how an edge cannot simultaneously be contracted and deleted, it is reasonable to impose that soft photons cannot be contracted and hence cannot appear in $T_1$ or $T_2$.  Every soft photon must, then, be on the left hand side of every term of $\widetilde{\rho}(G)$.

Note that there is a maximum number of photons that can appear in a left hand side graph, namely $\ell+1-2k$ since every fermion loop must be cut.  In the case that exactly $\ell+1-2k$ photons are soft, things are now particularly simple as all photons must appear on the left hand side of $\widetilde{\rho}(G)$ and no other photons can appear there, so all photons on the left hand side can be factored out, and further the number of fermion edges must be the same in each left hand graph since the number of photon edges is the same in each and the total number of edges is always the same in each.  With all of these considerations taken into account we get that $\widetilde{\rho}(G)$ has simplified to
\[
b_{2k} r^{\ell+1-2k} \otimes \sum_{\substack{C \text{ cut of } G \text{ into } G_1 \text{ and  } G_2 \\ \text{separating external edges}\\\text{cutting all fermion loops} \\\text{leaving no soft photon bridges}}} G_1\cup G_2
\]
where $b_{2k}$ is the graph with two vertices, one with each external edge, and $2k$ fermions between them, $r$ is the graph with one vertex and a soft photon loop, and bridge is used in the sense of graphs theory.

The situation in \cite{KlannMSc} was even nicer in that all internal photons were soft (there was only one internal photon in his situation).  Now if a QED graph with two external photons has loop order $\ell$ then it has $3\ell-1$ internal edges total, $2\ell$ of which are fermions and $\ell+1$ of which are photons.  So if we want to set the maximum number of photons to be soft and we want this maximum to be all the photons then we need $\ell+1-2k = \ell+1$ so we need $k=1$, that is a single fermion loop.  By the same calculations, in the case of a single fermion loop, setting the maximum number of photons to be soft is equivalent to setting all photons to be soft.  The case with a single fermion loop has a name and a great deal of literature about it; it is the case of \emph{quenched} QED.  Quenched QED with all internal photons soft then is the nicest case in this formulation as the action of $\widetilde{\rho}$ becomes
\[
b_{2} r^{\ell-1} \otimes \sum_{C} G_1\cup G_2
\]
where the cuts $C$ are those induced by cutting the fermion loop so that one external edge is on each side.

\medskip

We can make a similar combinatorial analysis when there are more than two external edges.  Cuts still must have at least one external edge in each component, all fermion loops must still be cut, tadpoles (necessarily photon tadpoles) and soft internal photons all factor out on the left hand side, and the remaining graph has one vertex for each component of the cut with fermion and non-soft photons between the vertices.  If we additionally forbid contracting soft photons then we know all soft photons will appear (and can be factored out) on the left hand side.  In the case with the maximum number of soft photons, different graphs are now possible on the left, but they remain fairly tame with only a limited ability to redistribute the fermion edges between the vertices.

\bibliographystyle{amsplain}
\bibliography{phi_4.bib}

\end{document}